\providecommand{\pb}[1]{{\sc #1} problem}
\providecommand{\pbs}[1]{$\mathsf{#1}$ problem}
\providecommand{\seacabo}{\hfill $\Box$}
\title{Matching colored points with rectangles\thanks{This research has been partially supported by grant CONICYT, FONDECYT/Iniciaci\'on 11110069 (Chile).}}
\author{
L. E. Caraballo\inst{1}
\and 
C. Ochoa\inst{2}
\and
P. P\'{e}rez-Lantero\inst{3}
\and
J. Rojas-Ledesma\inst{2}
}
\institute{
	Facultad de Matem\'atica y Computaci\'on, Universidad de La Habana, Cuba. \email{luis.caraballo@iris.uh.cu} \and
	Departamento de Ciencias de la Computaci\'on, Universidad de Chile, Chile. \email{\{cochoa,jrojas\}@dcc.uchile.cl} \and
	Escuela de Ingenier\'ia Civil en Inform\'atica, Universidad de Valpara\'{i}so, Chile. \email{pablo.perez@uv.cl}
}
\begin{document}
\maketitle
\linenumbers

\begin{abstract}
Let $S$ be a point set in the plane
such that each of its elements is colored either red or blue. 
A matching of $S$ with rectangles is any set of pairwise-disjoint 
axis-aligned rectangles such that each rectangle contains 
exactly two points of $S$.
Such a matching is monochromatic if every rectangle contains
points of the same color, and is bichromatic if every rectangle
contains points of different colors.
In this paper we study the following two problems:
\begin{enumerate}
\item Find a maximum monochromatic matching of $S$ with rectangles.
\item Find a maximum bichromatic matching of $S$ with rectangles.
\end{enumerate}
For each problem we provide a polynomial-time approximation
algorithm that constructs a matching with at least $1/4$ of the
number of rectangles of an optimal matching.
We show that the first problem is $\mathsf{NP}$-hard even 
if either the matching rectangles
are restricted to axis-aligned segments or $S$ is in general position,
that is, no two points of $S$ share
the same $x$ or $y$ coordinate.
We further show that the second problem is also $\mathsf{NP}$-hard,
even if $S$ is in general position.
These $\mathsf{NP}$-hardness results follow by showing that deciding
the existence of a perfect matching is $\mathsf{NP}$-complete in each case.
The approximation results are based on a relation of our problem
with the problem of finding a maximum independent set in a family
of axis-aligned rectangles. 
With this paper we extend previous ones on matching one-colored points with
rectangles and squares, and matching two-colored points with segments.
Furthermore, using our techniques, we prove that it is $\mathsf{NP}$-complete
to decide a perfect matching with rectangles in the case where all points have
the same color, solving an open problem of Bereg, Mutsanas, and Wolff [CGTA (2009)].
\end{abstract}

\section{Introduction}\label{sec:intro}

Matching points in the plane with geometric objects consists in, given an input point set 
$S$ and a class $\mathcal{C}$ of geometric objects, finding a collection $M\subseteq \mathcal{C}$
such that each element of $M$ contains exactly two points of $S$ and every point of $S$
lies in at most one element of $M$. This kind of geometric matching problem was introduced
by \'Abrego et al.~\cite{abrego2009}, calling a geometric matching {\em strong}
if the geometric objects are disjoint, and {\em perfect} if every point of
$S$ belongs to some element of $M$. They studied the existence and properties of 
matchings for point sets in the plane when $\mathcal{C}$ is the set of axis-aligned squares,
or the family of disks.

Bereg et al.~\cite{bereg2009} continued the study of this class of problems. They 
proved by a constructive proof that if $\mathcal{C}$ is the class of axis-aligned rectangles, 
then every point set of $n$ points in the plane admits a strong matching that matches at least
$2\lfloor n/3\rfloor$ of the points; and leaved open the computational complexity of
finding such a maximum strong matching. They assume that there can be points with
the same $x$ or $y$ coordinate, condition that makes the optimization problem hard.  
In the case in which $\mathcal{C}$ is the class of axis-aligned 
squares, they proved that it is $\mathsf{NP}$-hard to decide whether a given point set admits a 
perfect strong matching.

In the setting of colored points, it is well known that every two-colored point set in the
plane such that no three points are collinear, consisting of $n$ red points and $n$ blue points, admits a perfect 
strong matching with straight segments, where
each segment connects points of different colors~\cite{larson1990}. 
Dumitrescu and Steiger~\cite{Dumitrescu2000} introduced
the study of strong straight segment matchings of two-colored point sets in the case where each segment must
match points of the same color. The current results are due to
Dumitrescu and Kaye~\cite{Dumitrescu2001}: Every two-colored point set $S$ of $n$ points
admits a strong straight segment matching that matches at least $\frac{6}{7}n-O(1)$ of the points,
which can be found in $O(n^2)$ time; and there exist $n$-point 
sets such that every strong matching with straight segments matches
at most $\frac{94}{95}n+O(1)$ points. The computational complexity of deciding if a 
given two-colored point set admits a perfect strong matching with straight segments 
connecting points of the same color, is
still an open problem~\cite{Dumitrescu2000}. 

Let $S=R\cup B$ be a set of $n$ points in the plane such that each element of
$S$ is colored either red or blue, where $R$ denotes the set of the points
colored red and $B$ the set of the points colored blue. A strong matching of $S$ is
{\em monochromatic} if all matching objects
cover points of the same color. Likewise, a strong matching of $S$ is
{\em bichromatic} if all matching objects
cover points of different colors.

As an extension of the above problems, we study both monochromatic and bichromatic 
strong matchings of $S$ with axis-aligned rectangles. For the
monochromatic case it is trivial to build examples in which no matching rectangle exists 
and examples in which a perfect strong matching exists.
%
For the bichromatic case, there always exists at least one matching rectangle 
(i.e.\ match the red point and the blue point such that their minimum enclosing rectangle
has minimum area among all combinations of a red point and a blue point)
and similar as the the monochromatic case, one can build examples in which exactly one
matching rectangle exists and examples in which a perfect strong matching exists.
Then, we focus our attention in the following optimization problems: 

\medskip

\noindent\pb{Maximum Monochromatic Rectangle Matching ($\mathsf{MMRM}$)}: {\em Find a monochromatic 
strong matching of $S$ with the maximum number of rectangles.}

\medskip

\noindent\pb{Maximum Bichromatic Rectangle Matching ($\mathsf{MBRM}$)}: {\em Find a bichromatic 
strong matching of $S$ with the maximum number of rectangles.}

\paragraph{Results.} 
For each problem we provide a polynomial-time approximation
algorithm that constructs a matching with at least $1/4$ of the
number of rectangles of an optimal matching.
In the approximation algorithms we consider that the elements of $S$
are not necessarily in general position. We say that $S$ is in 
{\em general position} if no two elements of $S$ share the same $x$ or
$y$ coordinate.
We further use the direct relation of the 
problems with the \pb{Maximum Independent Set of Rectangles},
which is to find a maximum subset of pairwise disjoint rectangles in a given family
of rectangles.
We complement the approximation results by
showing that the \pbs{MMRM} is $\mathsf{NP}$-hard, even 
if either the matching rectangles
are restricted to axis-aligned segments or the points
are in general position.
We further show that the \pbs{MBRM} is also $\mathsf{NP}$-hard,
even is the points are in general position.
Furthermore, we are able to prove that if all elements of $S$ have
the same color, then the \pbs{MMRM} keeps $\mathsf{NP}$-hard, solving
an open question of Bereg et al.~\cite{bereg2009}.
These $\mathsf{NP}$-hardness results follow by showing that deciding
the existence of a perfect matching is $\mathsf{NP}$-complete in each case.

\section{Preliminaries} 

For every point $p$ of $S$, let $x(p)$, $y(p)$, and $c(p)$ denote
the $x$-coordinate, the $y$-coordinate, and the color of $p$, respectively.
Given two points $a$ and $b$ of the plane with $x(a)\leq x(b)$, let $D(a,b)$ denote 
the rectangle which has the segment 
connecting $a$ and $b$ as diagonal, which is in fact the minimum enclosing axis-aligned
rectangle of $a$ and $b$. If $a$ and $b$ are horizontally or vertically aligned,
we say that $D(a,b)$ is a {\em segment}, otherwise we say that $D(a,b)$ is a {\em box}.
We say that $D(a,b)$ is {\em red} if both $a$ and $b$ are colored red. Otherwise,
if both $a$ and $b$ are colored blue, we say that $D(a,b)$ is {\em blue}.
Given $S$, consider the following two families of axis-aligned rectangles:
\begin{eqnarray*}
\mathcal{R}(S) & := & \bigl\{D(p,q)~|~p,q\in S; c(p)=c(q); \text{ and } D(p,q)\cap S=\{p,q\}\bigr\}\\
\overline{\mathcal{R}}(S) & := & \bigl\{D(p,q)~|~p,q\in S; c(p)\neq c(q); \text{ and } D(p,q)\cap S=\{p,q\}\bigr\}
\end{eqnarray*}
Observe that the \pbs{MMRM} is equivalent to finding a maximum subset of $\mathcal{R}(S)$
of independent rectangles. Two rectangles are {\em independent} if they are disjoint.
Similarly, the \pbs{MBRM} is equivalent to finding a maximum subset 
of $\overline{\mathcal{R}}(S)$ of independent rectangles.
The \pb{Maximum Independent Set of Rectangles ($\mathsf{MISR}$)} is a classical $\mathsf{NP}$-hard problem
in computational geometry and combinatorics, and is to
find a maximum subset of independent rectangles in a given 
set of axis-aligned rectangles~\cite{adamaszekW13,Agarwal200683,Chalermsook2011,Chalermsook2009,Fowler1981,Imai1983,rim1995}.
The general \pbs{MISR} admits a polynomial-time approximation algorithm,
which with high probability produces and independent set of rectangles with at least
$\Omega(\frac{1}{\log\log m})$ times the number of rectangles in an optimal solution, being
$m$ the number of rectangles in the input~\cite{Chalermsook2011,Chalermsook2009}.
There also exist deterministic polynomial-time
$\Omega(\frac{1}{\log m})$-approximation algorithms for the \pbs{MISR}~\cite{AgarwalKS98,KhannaMP98}.
Finding a constant-approximation algorithm, or a $\mathsf{PTAS}$, is still
an intriguing open question.
As we will show later, our two matching problems, being special cases of the \pbs{MISR},
are also $\mathsf{NP}$-hard and we give a polynomial-time $1/4$-approximation algorithm for each of them.

There exists polynomial-time exact algorithms, constant-approximation algorithms, and $\mathsf{PTAS}$'s for special 
cases of the \pbs{MISR}, according to the intersection graph of the rectangles. The {\em intersection graph}
is the undirected graph with the rectangles of the input as vertices, and two
rectangles are adjacent if they are not independent. For any set $\mathcal{H}$ of rectangles,
let $G(\mathcal{H})$ denote the intersection graph of $\mathcal{H}$.
Given two rectangles $R_1$ and $R_2$,
we say that $R_2$ {\em pierces} $R_1$ if into the $x$-axis the orthogonal projection of $R_1$ 
contains the orthogonal projection of $R_2$, and into the $y$-axis the orthogonal projection of $R_2$ 
contains the orthogonal projection of $R_1$. 
We say that two intersecting rectangles {\em pierce} if 
one of them pierces the other one (see Figure~\ref{fig:pierce})~\cite{Agarwal200683,LewinEytan2004,soto2011}. 
Independently, Agarwal and Mustafa~\cite{Agarwal200683} and Lewin-Eytan et al.~\cite{LewinEytan2004}
showed that if all rectangles are pairwise-piercing then the 
\pbs{MISR} can be solved in polynomial time since in this case the intersection graph $G$
of the rectangles is perfect.
Using a classical result of Gr\"{o}tschel et al.~\cite{Grotschel1984325},
a maximum independent set of a perfect graph can be computed in polynomial time. 
Agarwal and Mustafa~\cite{Agarwal200683} generalized this fact, claiming that
the spanning subgraph of the intersection graph, with only the edges corresponding to the piercing intersections, 
is also perfect. We will use these results on pairwise-piercing rectangles
in our approximation algorithms.
If $q$ is the clique number of the intersection graph, there
exists a $(1/4q)$-approximation~\cite{Agarwal200683,LewinEytan2004}. In our two 
problems, we can build examples in which the size of the optimal solution is either
big or small, and independently of that, the clique number $q$ is either big or small. 
Then, applying this result does not always give a good approximation.

According to the nature of the rectangles in our families $\mathcal{R}(S)$ and $\overline{\mathcal{R}}(S)$,
two rectangles can have one of four types of intersection:
(1) a {\em piercing} intersection in which the two rectangles pierce (see Figure~\ref{fig:pierce});
(2) a {\em corner} intersection in which each rectangle contains exactly one
of the corners of the other one and these corners are not elements of $S$ (see Figure~\ref{fig:corner});
(3) a {\em point} intersection where the intersection of the rectangles is precisely
an element of $S$ (see Figure~\ref{fig:point}); and
(4) a {\em side} intersection which is the complement of the above three intersection
types (see Figure~\ref{fig:side}). 

\begin{figure}[h]
	\centering
	\subfloat[]{
		\includegraphics[scale=0.6,page=1]{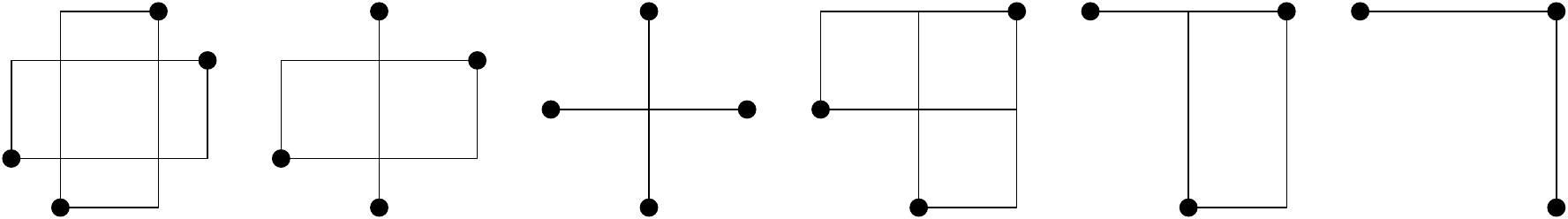}
		\label{fig:pierce}
	}\\
	\subfloat[]{
		\includegraphics[scale=0.6,page=2]{intersections.pdf}
		\label{fig:corner}
	}\hspace{0.3cm}
	\subfloat[]{
		\includegraphics[scale=0.6,page=3]{intersections.pdf}
		\label{fig:point}
	}
	\hspace{0.3cm}
	\subfloat[]{
		\includegraphics[scale=0.6,page=4]{intersections.pdf}
		\label{fig:side}
	}
	\caption{\small{The four types of intersection: (a) piercing. (b) corner.
	(c) point. (d) side.}}
	\label{fig:intersections}
\end{figure}

Let $G:=G(\mathcal{R}(S))$.
Observe that if we consider the spanning subgraph $G'$ of $G$ with edge set the edges corresponding to 
the piercing intersections, and compute in polynomial time the maximum independent set 
for $G'$~\cite{Agarwal200683,LewinEytan2004},
then we will obtain a set $H\subseteq\mathcal{R}(S)$ of pairwise non-piercing rectangles. In that
case the set $H$ (after a slight perturbation that maintains the same 
intersection graph) is a set of pseudo-disks and the $\mathsf{PTAS}$ of Chan and Har-Peled~\cite{chan2009},
for approximating the maximum independent set in a family of pseudo-disks,
can be applied in $H$ to obtain an independent set $H'\subseteq H \subseteq \mathcal{R}(S)$. Unfortunately, 
we are unable 
to compare $|H'|$ with the optimal value of the \pbs{MISR} for $\mathcal{R}(S)$. The same
arguments apply for $\overline{\mathcal{R}}(S)$. On the other hand, there exits $\mathsf{PTAS}$'s
for the \pbs{MISR} when the rectangles have unit height~\cite{Chan200419}, and
bounded aspect ratio~\cite{Chan03,ErlebachJS05}.

Soto and Telha~\cite{soto2011} studied the following problem to model cross-free matchings
in two-directional orthogonal ray graphs (2-dorgs): Given both a point set $A_1$ and a point set $A_2$,
find a maximum set of independent rectangles over all rectangles having 
an element of $A_1$ as bottom-left corner and an element of $A_2$ as top-right corner.
For $A_1:=R$ and $A_2:=B$, where $S=R\cup B$, this problem is equivalent to the \pbs{MISR} over the rectangles
$H\subseteq\overline{\mathcal{R}}(S)$ that have a red point as bottom-left corner and a blue point 
as top-right corner. 
The authors solved this problem in polynomial time with the next observations:
the rectangles of $H$ have only two types of intersections, piercing and corner, and $H$ can be reduced
to a small one $H'\subseteq H$ whose intersection graph is perfect since the elements of $H'$ 
are pairwise piercing, and a maximum
independent set in $H'$ is a maximum independent set in $H$.
They proved them by using an LP-relaxation approach. By using
simpler combinatorial arguments, we generalize and prove these observations 
to obtain our approximation algorithms.


\section{Approximation algorithms}\label{sec:approximation}

Given a point set $P$ in the plane, we say that $\mathcal{H}$ is a 
{\em set of rectangles on $P$} if every element of $\mathcal{H}$ is of the form
$D(a,b)$, where $a,b\in P$ and $D(a,b)$ contains exactly the points $a$ and $b$ of $P$.
We say that the set $\mathcal{H}$ is {\em complete} if for every pair of elements $D(a,b)$
and $D(a',b')$ of $\mathcal{H}$ that have a corner intersection, the other two rectangles
of the form $D(p,q)$ having a piercing intersection, 
where $p\in\{a,a'\}$ and $q\in\{b,b'\}$, also belong to $\mathcal{H}$
(see Figure~\ref{fig:rect-corner}).
Let $G_{p,c}(\mathcal{H})$ denote the spanning subgraph of $G(\mathcal{H})$ with edge set the edges
that correspond to the piercing and the corner intersections.

\begin{figure}[h]
	\centering
	\subfloat[]{
		\includegraphics[scale=0.5,page=1]{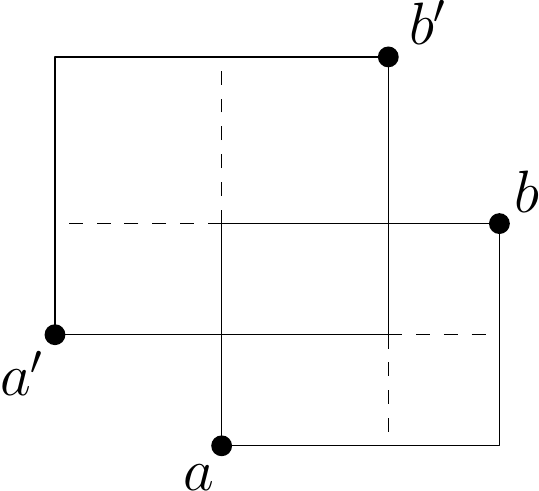}
		\label{fig:rect-corner}
	}\hspace{0.3cm}
	\subfloat[]{
		\includegraphics[scale=0.5,page=3]{discard_corners.pdf}
		\label{fig:case1}
	}\hspace{0.3cm}
	\subfloat[]{
		\includegraphics[scale=0.5,page=2]{discard_corners.pdf}
		\label{fig:case2}
	}	
	\caption{\small{(a) If for every $D(a,b),D(a',b')\in\mathcal{H}$ we have that
		$D(a,b'),D(a',b)\in\mathcal{H}$, then $\mathcal{H}$ is complete.
		(b,c) Cases in the proof of Lemma~\ref{lem:discard-corners}.}}
	\label{fig:remove-corners}
\end{figure}

\begin{lemma}\label{lem:discard-corners}
Let $P$ be a point set and $\mathcal{H}$ be any complete set
of rectangles on $P$. 
Let $D(a,b)$ and $D(a',b')$ be two elements of $\mathcal{H}$ such that
$D(a,b)$ and $D(a',b')$ have a corner intersection.
A maximum independent set in $G_{p,c}(\mathcal{H}\setminus\{D(a,b)\})$ is
a maximum independent set in $G_{p,c}(\mathcal{H})$. 
\end{lemma}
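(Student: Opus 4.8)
The plan is to prove the equivalent statement that deleting $D(a,b)$ does not decrease the size of a maximum independent set of $G_{p,c}(\mathcal{H})$. Since $G_{p,c}(\mathcal{H}\setminus\{D(a,b)\})$ is precisely the subgraph induced by removing the vertex $D(a,b)$, every independent set of it is already independent in $G_{p,c}(\mathcal{H})$; hence it suffices to show that some maximum independent set of $G_{p,c}(\mathcal{H})$ avoids $D(a,b)$. Equivalently, I would take an arbitrary maximum independent set $I$ of $G_{p,c}(\mathcal{H})$ with $D(a,b)\in I$ and produce from it a maximum independent set of the same cardinality not containing $D(a,b)$, by an exchange argument that replaces $D(a,b)$ with one of the two ``cross'' rectangles supplied by completeness.

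Write $v:=D(a,b)$ and $w:=D(a',b')$, and let $u_1:=D(a,b')$ and $u_2:=D(a',b)$ be the two piercing rectangles that completeness forces into $\mathcal{H}$. After fixing coordinates (using a reflection to reduce to a single corner-intersection type, as in Figures~\ref{fig:case1} and~\ref{fig:case2}), a direct comparison of $x$- and $y$-projections shows that each of $u_1,u_2$ pierces each of $v,w$ and that $u_1$ pierces $u_2$; thus $\{v,w,u_1,u_2\}$ is a clique of $G_{p,c}(\mathcal{H})$. Consequently $v\in I$ forces $w,u_1,u_2\notin I$, so both candidate sets $(I\setminus\{v\})\cup\{u_1\}$ and $(I\setminus\{v\})\cup\{u_2\}$ have the same cardinality as $I$, and it remains only to show that at least one of them is independent.

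The heart of the argument, and the step I expect to be the main obstacle, is to rule out that both replacements fail at once. A failure of the first gives a rectangle $R_1\in I\setminus\{v\}$ that is $G_{p,c}$-adjacent to $u_1$ but, being in $I$ together with $v$, not to $v$; symmetrically a failure of the second gives $R_2\in I\setminus\{v\}$ adjacent to $u_2$ but not to $v$. I would analyse the possible positions of such a blocker through its projections and the four intersection types, crucially using that every element of $\mathcal{H}$ contains no point of $P$ other than its two defining corners: in particular the corners of $R_1$ belong to $P$ and therefore cannot lie in the closed rectangle $w=D(a',b')$. This emptiness constraint forces a blocker of $u_1$ to span across the region of $w$ horizontally, while a blocker of $u_2$ must span it vertically; a final projection comparison then shows that $R_1$ and $R_2$ must pierce one another, and in the degenerate case $R_1=R_2$ that no single rectangle can block both. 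Since two members of the independent set $I$ cannot be $G_{p,c}$-adjacent, this is a contradiction, so one of the two exchanges succeeds.

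Finally I would assemble the pieces: the successful exchange yields a maximum independent set of $G_{p,c}(\mathcal{H})$ of the same size avoiding $D(a,b)$, so $G_{p,c}(\mathcal{H}\setminus\{D(a,b)\})$ and $G_{p,c}(\mathcal{H})$ have maximum independent sets of equal size, and any maximum independent set of the former is maximum in the latter. The delicate point throughout is the case analysis over intersection types in the obstacle step, where the completeness hypothesis (which supplies $u_1$ and $u_2$) and the point-emptiness of the rectangles of $\mathcal{H}$ must be combined with care; the two cases of Figures~\ref{fig:case1} and~\ref{fig:case2} serve to organise this analysis.
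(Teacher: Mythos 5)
Your proposal is correct and is essentially the paper's own proof: both run the same exchange argument in which completeness supplies the two piercing rectangles $D(a,b')$ and $D(a',b)$, and both rule out that the two replacements are simultaneously blocked, using the point-emptiness of the rectangles of $\mathcal{H}$ together with the independence of $I$; you merely state symmetrically (two blockers $R_1,R_2$ would have to conflict) what the paper states asymmetrically (any blocker of $D(a,b')$ guarantees that $D(a',b)$ is clean). One small caution: your summary that a blocker of $u_1$ must \emph{span} $D(a',b')$ horizontally, so that $R_1$ and $R_2$ pierce, covers only the paper's case (2) (Figure~\ref{fig:case2}); in case (1) (Figure~\ref{fig:case1}) the blocker only corner-intersects $D(a',b')$ without spanning it, and there the contradiction comes from a corner adjacency or from a $P$-point of one blocker being forced inside the other --- precisely the careful combination of completeness and emptiness you anticipate as the delicate step.
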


\begin{proof}
Let $I$ denote a maximum independent set of $G_{p,c}(\mathcal{H})$.
Assume w.l.o.g.\ that 
$x(a')<x(a)\leq x(b')<x(b)$ and $y(a)<y(a')\leq y(b)<y(b')$ (see Figure~\ref{fig:rect-corner}).
We claim that either $(I\setminus \{D(a,b)\})\cup\{D(a,b')\}$ or
$(I\setminus \{D(a,b)\})\cup\{D(a',b)\}$ is an independent set, which implies the result.
Indeed, if $(I\setminus \{D(a,b)\})\cup\{D(a,b')\}$ is an independent set, then we are done.
Otherwise, at least one of the next two cases is satisfied:
(1) there is a rectangle of $I\setminus\{D(a,b)\}$ that has a corner intersection with both 
$D(a',b')$ and $D(a,b')$ (see Figure~\ref{fig:case1}); and
(2)  there is a rectangle of $I\setminus\{D(a,b)\}$ that has a piercing intersection with both 
$D(a',b')$ and $D(a,b')$ (see Figure~\ref{fig:case2}).
In both cases $D(a',b)$ is independent from any rectangle
in $I\setminus\{D(a,b)\}$. Hence, $(I\setminus \{D(a,b)\})\cup\{D(a',b)\}$ is an independent set.\seacabo
\end{proof}

\begin{lemma}\label{lem:solve-pierce-misr}
Let $P$ be a point set and $\mathcal{H}$ be any complete set
of rectangles on $P$. A maximum independent set 
in $G_{p,c}(\mathcal{H})$ can be found in polynomial time.  
\end{lemma}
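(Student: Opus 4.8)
The plan is to reduce the family $\mathcal{H}$ to a subfamily whose only remaining edges in $G_{p,c}$ are piercing edges, and then invoke the perfect-graph machinery already available in the literature. Recall that in $G_{p,c}(\mathcal{H})$ the edges come in exactly two flavours, piercing and corner; side and point intersections contribute no edge. So the goal is to eliminate all corner edges while keeping the size of a maximum independent set of $G_{p,c}$ unchanged, and then to solve the resulting piercing-only instance exactly.

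First I would iterate Lemma~\ref{lem:discard-corners}. As long as $G_{p,c}(\mathcal{H})$ contains a corner edge $\{D(a,b),D(a',b')\}$, the lemma lets me delete one of the two rectangles, say $D(a,b)$, without changing the size of a maximum independent set of $G_{p,c}$; moreover every maximum independent set of the smaller graph is also maximum in the original one. Since deleting a vertex never creates new edges, the set of corner edges can only shrink, and each deletion strictly decreases $|\mathcal{H}|$. Hence after at most $|\mathcal{H}|$ steps I reach a subfamily $\mathcal{H}^{*}\subseteq\mathcal{H}$ with no corner intersections, and a maximum independent set of $G_{p,c}(\mathcal{H}^{*})$ is a maximum independent set of the original $G_{p,c}(\mathcal{H})$.

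It then remains to handle the base case. Since $\mathcal{H}^{*}$ has no corner intersections, $G_{p,c}(\mathcal{H}^{*})$ is precisely the spanning subgraph of $G(\mathcal{H}^{*})$ that keeps only the piercing edges. By the results of Agarwal and Mustafa~\cite{Agarwal200683} and Lewin-Eytan et al.~\cite{LewinEytan2004}, this piercing subgraph is perfect, so by Gr\"otschel et al.~\cite{Grotschel1984325} a maximum independent set in it can be computed in polynomial time. Pulling this set back through the chain of deletions yields a maximum independent set of $G_{p,c}(\mathcal{H})$, and the whole procedure clearly runs in polynomial time.

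The hard part will be guaranteeing that Lemma~\ref{lem:discard-corners} can actually be re-applied after each deletion, i.e.\ that completeness is preserved along the way; the entire reduction hinges on this. The subtlety is that the deleted rectangle $D(a,b)$, while being a corner-participant of one pair, could a priori be one of the two piercing completions required by the completeness condition of a \emph{different} corner pair that survives the deletion, and removing it would then break completeness and stall the iteration. I would address this with an auxiliary invariant asserting that deleting a corner-participant from a complete family again yields a complete family (choosing carefully, if necessary, which of the two corner rectangles to delete). Establishing it amounts to a case analysis on the relative coordinates of the points involved, showing that a rectangle sitting in a corner intersection cannot simultaneously serve as a completion for another surviving corner pair: the configurations forced by completeness, where every corner pair already drags in both of its completions, leave no room for such a clash. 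Once this invariant is in place, the reduction above is well-defined and the lemma follows.
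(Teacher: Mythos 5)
Your overall route is exactly the paper's: iterate Lemma~\ref{lem:discard-corners} until no corner edges remain, then observe that the surviving graph has only piercing edges, hence is a comparability (perfect) graph by~\cite{Agarwal200683,LewinEytan2004,soto2011}, and apply~\cite{Grotschel1984325}. Where you go beyond the paper is in flagging the re-applicability issue: the paper's two-line proof simply asserts that $\mathcal{H}$ ``can be reduced'' by iterating the lemma, silently assuming that completeness survives each deletion. You are right that this is the crux, and right that the whole reduction hinges on it.

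However, the auxiliary invariant you propose to close it is false as stated. A rectangle in a corner intersection \emph{can} simultaneously be a required piercing completion of another surviving corner pair. Take the six points $a=(0,0)$, $b=(10,10)$, $a'=(-8,8)$, $b'=(6,12)$, $p=(-5,3)$, $q=(15,5)$. Each of $D(a,b)$, $D(a',b')$, $D(a,q)$, $D(p,b)$, $D(p,q)$ contains exactly its two defining points, $D(a,b)$ and $D(a',b')$ form a corner pair (they satisfy precisely the w.l.o.g.\ ordering of Lemma~\ref{lem:discard-corners}), and $D(a,q)$ and $D(p,b)$ form a second corner pair whose two completions under the completeness condition are $D(p,q)$ and $D(a,b)$ itself. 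So in any complete family containing these rectangles (the closure under completions of this configuration is a finite complete family on these six points), deleting $D(a,b)$ leaves the surviving corner pair $(D(a,q),D(p,b))$ without one of its completions, and completeness is broken --- exactly the stall you feared. Your parenthetical hedge (delete the \emph{other} member of the pair) does rescue this particular instance, since $D(a',b')$ is not a completion of any surviving pair, but you give no argument that a safe choice always exists, and your stated justification --- that completeness ``leaves no room for such a clash'' --- is refuted by the example above. So your proposal, as written, has a genuine gap at the step you yourself identified as the hard part; to be fair, the paper's own proof does not address this point either, and repairing it would require an additional argument (for instance, a strengthened version of Lemma~\ref{lem:discard-corners} whose exchange argument tolerates missing completions, or a potential/ordering argument exploiting that each replacement rectangle $D(a,b')$ or $D(a',b)$ is strictly narrower or strictly shorter than $D(a,b)$).
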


\begin{proof}
Using Lemma~\ref{lem:discard-corners}, the set $\mathcal{H}$ can be
reduced in polynomial time to the set $\mathcal{H}'\subseteq \mathcal{H}$ such 
that all edges of the graph $G_{p,c}(\mathcal{H}')$ correspond to
piercing intersections, and a maximum independent set in $G_{p,c}(\mathcal{H}')$
is a maximum independent set in $G_{p,c}(\mathcal{H})$.
The former one can be found in polynomial time since $G_{p,c}(\mathcal{H}')$ is a
perfect graph, precisely a comparability graph~\cite{Agarwal200683,LewinEytan2004,soto2011}.
\end{proof}

\subsection{Approximation for the \pbs{MMRM}}

Let $S=R\cup B$ be a colored point set in the plane.
Let $\mathcal{R}_1$ and $\mathcal{R}_2$ be the next two families
of rectangles of $\mathcal{R}(S)$ (see Figure~\ref{fig:classes}): 
\begin{itemize}
\item $\mathcal{R}_1$ contains the blue rectangles with a point
of $S$ in the bottom-left corner, and the red rectangles with a point
of $S$ in the bottom-right corner.
\item $\mathcal{R}_2$ contains the blue rectangles with a point
of $S$ in the bottom-right corner, and the red rectangles with a point
of $S$ in the bottom-left corner.
\end{itemize}

\begin{figure}[h]
	\centering	
	\includegraphics[scale=0.5,page=1]{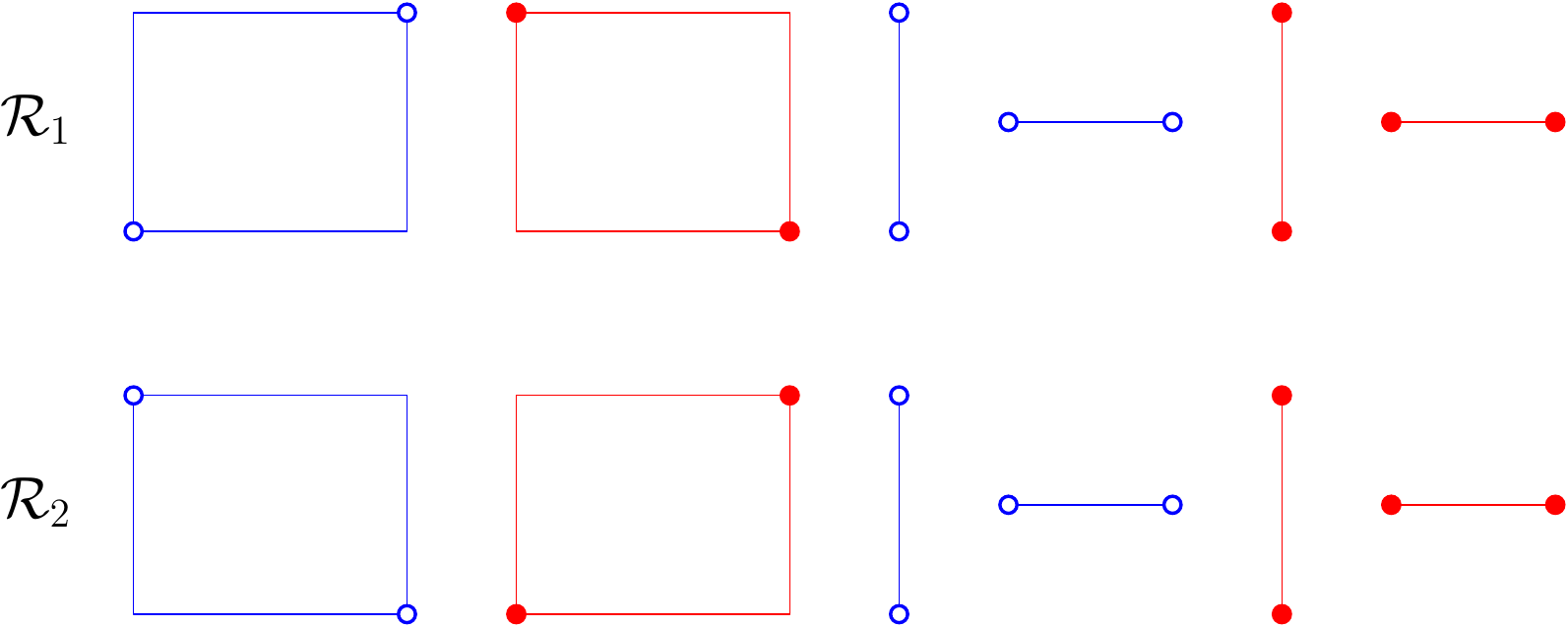}		
	\caption{\small{The families $\mathcal{R}_1$ and $\mathcal{R}_2$.}}
	\label{fig:classes}
\end{figure}

\begin{lemma}\label{lem:aprox-Ri}
There exists a polynomial-time (1/2)-approximation algorithm for the maximum independent
set of $\mathcal{R}_1$ and $\mathcal{R}_2$, respectively.
\end{lemma}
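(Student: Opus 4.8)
The plan is to compute, in polynomial time, a maximum independent set $I$ of the graph $G_{p,c}(\mathcal{R}_1)$ (whose edges are only the piercing and the corner intersections) and then to repair the conflicts that remain inside $I$, losing at most half of its rectangles. Since the true intersection graph $G(\mathcal{R}_1)$ contains all the edges of $G_{p,c}(\mathcal{R}_1)$, every feasible solution of the \pbs{MISR} on $\mathcal{R}_1$ is independent in $G_{p,c}(\mathcal{R}_1)$, so $|I|$ is at least the optimum value; hence a genuinely independent subset of $I$ of size at least $|I|/2$ is a $1/2$-approximation.

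The core of the argument is a structural classification of how two rectangles of $\mathcal{R}_1$ can intersect, and here the emptiness of the rectangles (each $D(p,q)\in\mathcal{R}(S)$ satisfies $D(p,q)\cap S=\{p,q\}$) does all the work. I would first note that no rectangle of $\mathcal{R}_1$ can contain a defining corner of another, nor contain two corners of another (adjacent or opposite), since either situation would place a point of $S$ strictly inside an empty rectangle. Using the fact that for two axis-aligned rectangles the pair (number of corners of one inside the other, number of corners of the other inside the first) can only be $(0,0),(1,1),(2,0),(0,2),(4,0)$ or $(0,4)$, emptiness rules out all patterns except $(0,0)$ (piercing) and $(1,1)$ (corner), together with the boundary case where the overlap is a single shared defining point (a point intersection). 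In particular $\mathcal{R}_1$ has no side intersections, so $G(\mathcal{R}_1)$ equals $G_{p,c}(\mathcal{R}_1)$ augmented with the point-intersection edges. Separating the two colors, a short coordinate computation shows that a $(1,1)$ pattern between a blue rectangle (point at its bottom-left corner) and a red rectangle (point at its bottom-right corner) forces the two rectangles to share an $x$- or a $y$-coordinate; so, when $S$ is in general position, rectangles of different colors in $\mathcal{R}_1$ can only pierce. A $(1,1)$ corner intersection is possible between two rectangles of the same color $D(a,b),D(a',b')$, but then each of the swapped piercing rectangles $D(a,b')$ and $D(a',b)$ is contained in $D(a,b)\cup D(a',b')$, hence is itself empty and of the correct color and orientation, and therefore belongs to $\mathcal{R}_1$. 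Consequently $\mathcal{R}_1$ is a \emph{complete} set of rectangles, and Lemma~\ref{lem:solve-pierce-misr} delivers the maximum independent set $I$ of $G_{p,c}(\mathcal{R}_1)$ in polynomial time.

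It remains to discard the point intersections inside $I$. Rectangles of different colors use points of different colors and so never share a defining point, and since $I$ has no piercing or corner intersections, two rectangles of $I$ intersect exactly when they have the same color and share a defining point. Around any point $p$ at most one rectangle of $I$ can use $p$ as its bottom-left corner and at most one as its top-right corner, since two rectangles sharing such a corner would overlap in a region and thus pierce or cross, which is excluded in $I$; hence in the graph on $I$ whose edges are the point intersections every vertex has degree at most two. Orienting, within each color class, every rectangle along its defining diagonal makes this graph acyclic, so it is a disjoint union of paths. A maximum independent set of a union of paths on $|I|$ vertices has size at least $|I|/2$ and is found greedily, and by the classification it is a genuine independent set of $\mathcal{R}_1$; this gives the desired $1/2$-approximation, and the argument for $\mathcal{R}_2$ is symmetric.

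I expect the main obstacle to be the structural classification under degeneracies. When points of $S$ share coordinates the clean dichotomy ``same color $\Rightarrow$ piercing or corner, different colors $\Rightarrow$ piercing'' can break down, and a mixed corner intersection would destroy completeness because the rectangle obtained by swapping its defining corners is bichromatic and thus lies outside $\mathcal{R}_1$. I would handle this by a symbolic perturbation of $S$ that breaks all coordinate ties while preserving which rectangles are empty and which pairs of rectangles are independent, run the entire argument in general position, and transfer the resulting matching back to the original input.
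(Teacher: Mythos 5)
Your proposal follows essentially the same route as the paper's proof: observe that in $\mathcal{R}_1$ mixed-color pairs can only pierce and same-color pairs have no side intersections, verify that $\mathcal{R}_1$ is a complete set of rectangles on $S$, invoke Lemma~\ref{lem:solve-pierce-misr} to compute a maximum independent set $H$ of $G_{p,c}(\mathcal{R}_1)$ in polynomial time, note that the only conflicts surviving in $H$ are same-color point intersections whose conflict graph is acyclic, extract at least $|H|/2$ genuinely independent rectangles, and conclude via $\mathsf{OPT}_1\leq |H|\leq 2|I|$. Your two elaborations---the containment $D(a,b')\subseteq D(a,b)\cup D(a',b')$ certifying emptiness of the swapped rectangles (hence completeness), and the degree-at-most-two argument for the point-intersection graph---are correct fleshings-out of steps the paper merely asserts.

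The one place you diverge is the treatment of degeneracies, and there your proposal is weaker than the paper's claim (the paper explicitly allows points sharing coordinates in the approximation section), and the patch you sketch is shaky. First, the hedge is unnecessary: mixed corner intersections are impossible outright, with no general-position assumption. In any corner pattern the two contained corners are opposite corners of their rectangles; a blue rectangle of $\mathcal{R}_1$ carries its points of $S$ at the bottom-left and top-right corners while a red one carries them at the bottom-right and top-left, so every one of the four possible opposite-corner pairings places a defining point (an element of $S$) inside the closed other rectangle, contradicting its emptiness. The same emptiness argument kills mixed side intersections (any two corners of one rectangle inside another include a defining point). Your ``short coordinate computation'' concluding that degenerate mixed corner intersections can occur when coordinates are shared is therefore mistaken, and it is what drives you to the perturbation fallback. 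Second, that fallback as stated has a genuine hole: a tie-breaking perturbation that preserves emptiness need not preserve the intersection graph---two rectangles meeting along a common edge line or at a single non-$S$ point can become disjoint after perturbing, so an independent set computed in the perturbed instance can map back to rectangles that touch, i.e., to a matching that is not strong. The property you assert (``preserving which pairs of rectangles are independent'') is exactly the nontrivial part and is left unproved. Fortunately the fix is to delete the fallback entirely: by the emptiness observation above, your main argument runs verbatim on degenerate inputs, matching the paper.
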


\begin{proof}
Consider the family $\mathcal{R}_1$, the arguments for the family $\mathcal{R}_2$
are analogous.
Let $\mathsf{OPT}_1$ denote the size of a maximum independent set in $\mathcal{R}_1$.
Observe that a blue and a red rectangle in $\mathcal{R}_1$ can have only
a piercing intersection, that two rectangles of the same color
cannot have a side intersection, and that $G_{p,c}(\mathcal{R}_1)$ is 
a complete set of rectangles on $S$.
Let $H$ be a maximum independent set of $G_{p,c}(\mathcal{R}_1)$ which can be found
in polynomial time by Lemma~\ref{lem:solve-pierce-misr}.
Note that in $H$ every blue rectangle is independent from
every red rectangle, and rectangles of the same color can have
point intersections only. Further observe that the graph $G(H)$
is acyclic and thus 2-coloreable. Such a 2-coloring of $G(H)$ 
can be found in polynomial time and gives an independent set $I$ of $H$ with at least
$|H|/2$ rectangles, which is an independent set in $\mathcal{R}_1$.
The set $I$ is the approximation and satisfies
$\mathsf{OPT}_1\leq |H|\leq 2|I|$. The result thus follows.\seacabo
\end{proof}

\begin{theorem}\label{lem:aprox-MMRM}
There exists a polynomial-time (1/4)-approximation algorithm for the \pbs{MMRM}.
\end{theorem}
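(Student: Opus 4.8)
The plan is to build the $(1/4)$-approximation by combining the two $(1/2)$-approximations supplied by Lemma~\ref{lem:aprox-Ri}, using the fact that $\mathcal{R}_1$ and $\mathcal{R}_2$ partition $\mathcal{R}(S)$. First I would verify this partition explicitly. Every rectangle $D(p,q)\in\mathcal{R}(S)$ is monochromatic, hence either red or blue, and its two defining points occupy a pair of opposite corners. If its diagonal has positive slope, a point of $S$ sits in the bottom-left corner; if its diagonal has negative slope, a point of $S$ sits in the bottom-right corner. The four resulting combinations of color and diagonal orientation are precisely the ones used in the definitions of $\mathcal{R}_1$ and $\mathcal{R}_2$, so each rectangle of $\mathcal{R}(S)$ belongs to exactly one of the two families, i.e.\ $\mathcal{R}(S)=\mathcal{R}_1\sqcup\mathcal{R}_2$.

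Let $\mathsf{OPT}$, $\mathsf{OPT}_1$, and $\mathsf{OPT}_2$ denote the sizes of maximum independent sets in $\mathcal{R}(S)$, $\mathcal{R}_1$, and $\mathcal{R}_2$, respectively. The key inequality I would establish is $\mathsf{OPT}\leq\mathsf{OPT}_1+\mathsf{OPT}_2$. To prove it, fix a maximum independent set $I^\star\subseteq\mathcal{R}(S)$ and split it along the partition as $I^\star=(I^\star\cap\mathcal{R}_1)\cup(I^\star\cap\mathcal{R}_2)$. Each part is itself an independent set contained in its own family, so $|I^\star\cap\mathcal{R}_1|\leq\mathsf{OPT}_1$ and $|I^\star\cap\mathcal{R}_2|\leq\mathsf{OPT}_2$, and adding the two bounds gives the claim. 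In particular $\max\{\mathsf{OPT}_1,\mathsf{OPT}_2\}\geq\mathsf{OPT}/2$.

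Next I would run the algorithm of Lemma~\ref{lem:aprox-Ri} separately on $\mathcal{R}_1$ and on $\mathcal{R}_2$, obtaining in polynomial time independent sets $I_1\subseteq\mathcal{R}_1$ and $I_2\subseteq\mathcal{R}_2$ with $|I_1|\geq\mathsf{OPT}_1/2$ and $|I_2|\geq\mathsf{OPT}_2/2$. The algorithm returns whichever of $I_1,I_2$ is larger, which is an independent set of $\mathcal{R}(S)$ of size
\[
\max\{|I_1|,|I_2|\}\;\geq\;\tfrac12\max\{\mathsf{OPT}_1,\mathsf{OPT}_2\}\;\geq\;\tfrac12\cdot\tfrac{\mathsf{OPT}}{2}\;=\;\tfrac{\mathsf{OPT}}{4}.
\]
Since an independent set of $\mathcal{R}(S)$ is exactly a monochromatic matching of $S$ with rectangles, this yields the desired $(1/4)$-approximation for the \pbs{MMRM}.

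The argument is short, and the substantive work has already been done in Lemmas~\ref{lem:discard-corners}--\ref{lem:aprox-Ri}; the theorem is essentially a ``keep the better of two halves'' reduction on top of them. The only point requiring care, and the one I expect to be the main obstacle, is the partition step for degenerate rectangles: when two monochromatic points share an $x$- or $y$-coordinate the rectangle $D(p,q)$ is a segment, for which ``bottom-left'' and ``bottom-right'' are not distinguished by slope. I would fix a consistent tie-breaking rule assigning each such segment to exactly one of $\mathcal{R}_1,\mathcal{R}_2$, which is all that the bound $\mathsf{OPT}\leq\mathsf{OPT}_1+\mathsf{OPT}_2$ actually needs, so that the combination goes through without any general-position assumption.
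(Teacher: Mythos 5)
Your proof is correct and follows essentially the same route as the paper's: split $\mathcal{R}(S)$ into $\mathcal{R}_1$ and $\mathcal{R}_2$, apply the $(1/2)$-approximation of Lemma~\ref{lem:aprox-Ri} to each, return the larger set, and conclude via $\mathsf{OPT}\leq\mathsf{OPT}_1+\mathsf{OPT}_2\leq 2|I_1|+2|I_2|\leq 4\max\{|I_1|,|I_2|\}$. Your extra care about degenerate segments is welcome but not strictly needed: the inequality only requires that the two families \emph{cover} $\mathcal{R}(S)$, and a segment lying in both families does no harm.
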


\begin{proof}
Let $\mathsf{OPT}$ denote the size of a maximum independent set in $\mathcal{R}(S)$,
and $\mathsf{OPT}_1$ and $\mathsf{OPT}_2$ denote the sizes of the maximum 
independent sets in $\mathcal{R}_1$ and $\mathcal{R}_2$, respectively.
Let $I_1$ be a (1/2)-approximation for the maximum independent set in $\mathcal{R}_1$
and $I_2$ be a (1/2)-approximation for the maximum independent set in $\mathcal{R}_2$
(Lemma~\ref{lem:aprox-Ri}).
The approximation for the \pbs{MMRM} is to return the set with maximum elements
between $I_1$ and $I_2$.
Since $\mathsf{OPT}\leq \mathsf{OPT}_1+\mathsf{OPT}_2\leq 2|I_1|+2|I_2|\leq 4\max\{|I_1|,|I_2|\}$,
the result follows.\seacabo
\end{proof}

\subsection{Approximation for the \pbs{MMRM}}

Let $S=R\cup B$ be a colored point set in the plane.
Let $\overline{\mathcal{R}}_1$, $\overline{\mathcal{R}}_2$,
$\overline{\mathcal{R}}_3$, and $\overline{\mathcal{R}}_4$ be the next four families
of rectangles of $\overline{\mathcal{R}}(S)$: 
\begin{itemize}
\item $\overline{\mathcal{R}}_1$ contains the rectangles with a blue point
in the bottom-left corner.
\item $\overline{\mathcal{R}}_2$ contains the rectangles with a red point
in the bottom-left corner.
\item $\overline{\mathcal{R}}_3$ contains the rectangles with a blue point
in the bottom-right corner.
\item $\overline{\mathcal{R}}_4$ contains the rectangles with a red point
in the bottom-right corner.
\end{itemize}
Each of the above four families are complete sets of rectangles on $S$,
where every two rectangles have either a corner or a piercing intersection.
Then the maximum independent set in each family can be found in polynomial time (Lemma~\ref{lem:solve-pierce-misr}).
These observations imply the next result:

\begin{theorem}\label{lem:aprox-MBRM}
There exists a polynomial-time (1/4)-approximation algorithm for the \pbs{MBRM}.
\end{theorem}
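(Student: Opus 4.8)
The plan is to mimic the structure of the proof of Theorem~\ref{lem:aprox-MMRM}, but using a four-way partition instead of a two-way one. First I would observe that the four families $\overline{\mathcal{R}}_1,\overline{\mathcal{R}}_2,\overline{\mathcal{R}}_3,\overline{\mathcal{R}}_4$ cover all of $\overline{\mathcal{R}}(S)$: every bichromatic rectangle $D(p,q)$ has exactly one point in its bottom-left corner and one in its bottom-right corner, and that bottom-left point is either blue (placing the rectangle in $\overline{\mathcal{R}}_1$) or red (placing it in $\overline{\mathcal{R}}_2$), so in fact $\overline{\mathcal{R}}_1\cup\overline{\mathcal{R}}_2=\overline{\mathcal{R}}(S)$, and likewise $\overline{\mathcal{R}}_3\cup\overline{\mathcal{R}}_4=\overline{\mathcal{R}}(S)$. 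The key structural claim, already asserted in the excerpt, is that within any single family $\overline{\mathcal{R}}_i$ every two rectangles intersect only in a piercing or a corner fashion, and that each $\overline{\mathcal{R}}_i$ is a complete set of rectangles on $S$. Granting this, Lemma~\ref{lem:solve-pierce-misr} applies directly and lets me compute an exact maximum independent set $I_i$ in each $G_{p,c}(\overline{\mathcal{R}}_i)=G(\overline{\mathcal{R}}_i)$ in polynomial time.

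Next I would set up the approximation bound. Let $\mathsf{OPT}$ denote the size of a maximum independent set in $\overline{\mathcal{R}}(S)$, and for $i\in\{1,2,3,4\}$ let $\mathsf{OPT}_i$ be the maximum independent set size in $\overline{\mathcal{R}}_i$, with $I_i$ an exact optimum so that $|I_i|=\mathsf{OPT}_i$. Fixing any optimal independent set $O$ for $\overline{\mathcal{R}}(S)$, the partition $\overline{\mathcal{R}}(S)=\overline{\mathcal{R}}_1\cup\overline{\mathcal{R}}_2$ gives $O=(O\cap\overline{\mathcal{R}}_1)\cup(O\cap\overline{\mathcal{R}}_2)$, and each piece is independent inside its family, so $\mathsf{OPT}\le\mathsf{OPT}_1+\mathsf{OPT}_2$; hence
\[
\mathsf{OPT}\le\mathsf{OPT}_1+\mathsf{OPT}_2\le 2\max\{\mathsf{OPT}_1,\mathsf{OPT}_2\}\le 2\max_i|I_i|.
\]
Returning the largest of $I_1,\dots,I_4$ therefore already yields a $(1/2)$-approximation. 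In fact this argument shows that the four-family split even admits a $(1/2)$-approximation for the \pbs{MBRM}; to match the theorem statement I would simply return $\max_i |I_i|$, which satisfies $\mathsf{OPT}\le 4\max_i|I_i|$ a fortiori, and note that each $I_i$ is a genuine bichromatic strong matching since independence in $\overline{\mathcal{R}}_i$ means pairwise disjointness.

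The main obstacle I anticipate is verifying the intersection-type claim, namely that two rectangles of the same family $\overline{\mathcal{R}}_i$ can never have a point or a side intersection. This requires a short case analysis on the corner assignments: in $\overline{\mathcal{R}}_1$, say, both rectangles have a blue point at the bottom-left and a red point at the top-right, and I would argue that this fixed orientation of the red/blue endpoints forbids the two configurations that produce point and side intersections, leaving only piercing and corner. One must also confirm completeness, i.e.\ that whenever $D(a,b),D(a',b')\in\overline{\mathcal{R}}_i$ share a corner intersection, the two derived piercing rectangles $D(a,b')$ and $D(a',b)$ again lie in $\overline{\mathcal{R}}_i$; this follows because swapping the matched endpoints preserves both the colors at each corner and the bottom-left/bottom-right corner role, so the derived rectangles keep the defining property of $\overline{\mathcal{R}}_i$. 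Everything else is a direct invocation of Lemma~\ref{lem:solve-pierce-misr} and the counting inequality above, with no further heavy computation required.
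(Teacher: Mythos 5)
Your overall architecture is the paper's: compute an exact maximum independent set in each of the four complete families $\overline{\mathcal{R}}_1,\dots,\overline{\mathcal{R}}_4$ via Lemma~\ref{lem:solve-pierce-misr} (using that within each family only piercing and corner intersections occur), return the largest, and bound $\mathsf{OPT}$ by the sum of the four family optima. Your verification of completeness (the derived rectangles $D(a,b')$, $D(a',b)$ lie in the union $D(a,b)\cup D(a',b')$, hence contain no extra points of $S$ and keep the corner colors) is a correct fleshing-out of what the paper only asserts.

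However, your covering claim contains a genuine error. You assert that every bichromatic rectangle ``has exactly one point in its bottom-left corner and one in its bottom-right corner,'' and conclude $\overline{\mathcal{R}}_1\cup\overline{\mathcal{R}}_2=\overline{\mathcal{R}}(S)$. This is false: for a box $D(p,q)$ the two points of $S$ sit at \emph{opposite} corners, so they occupy either the bottom-left/top-right pair or the top-left/bottom-right pair, never both bottom corners. A rectangle spanned by a top-left/bottom-right pair (e.g., a blue point at $(0,1)$ and a red point at $(1,0)$) lies in $\overline{\mathcal{R}}_4$ but in neither $\overline{\mathcal{R}}_1$ nor $\overline{\mathcal{R}}_2$; in that two-point example $\mathsf{OPT}=1$ while $\mathsf{OPT}_1=\mathsf{OPT}_2=0$, refuting your inequality $\mathsf{OPT}\le\mathsf{OPT}_1+\mathsf{OPT}_2$ and with it your claimed $(1/2)$-approximation. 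The repair is exactly the paper's counting: the four families together do cover $\overline{\mathcal{R}}(S)$ (positive-slope boxes fall into $\overline{\mathcal{R}}_1\cup\overline{\mathcal{R}}_2$ by the color of the bottom-left point, negative-slope boxes into $\overline{\mathcal{R}}_3\cup\overline{\mathcal{R}}_4$ by the color of the bottom-right point, and degenerate segments into at least one family), whence
\[
\mathsf{OPT}\;\le\;\mathsf{OPT}_1+\mathsf{OPT}_2+\mathsf{OPT}_3+\mathsf{OPT}_4\;\le\;4\max_{i}|I_i|,
\]
which yields precisely the stated $(1/4)$ ratio --- no better --- and leaves the theorem intact once your ``a fortiori'' step is replaced by this four-term bound.
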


\section{Hardness}\label{sec:hardness} 

In this section we prove that the \pbs{MMRM} and the \pbs{MBRM} are
$\mathsf{NP}$-hard even if further conditions are assumed. To this end
we consider the next decision problems:

\medskip

\noindent\pb{Perfect Monochromatic Rectangle Matching ($\mathsf{PMRM}$)}: {\em Is there a perfect monochromatic 
strong matching of $S$ with rectangles?}

\medskip

\noindent\pb{Perfect Bichromatic Rectangle Matching ($\mathsf{PBRM}$)}: {\em Is there a perfect bichromatic 
strong matching of $S$ with rectangles?}
 
\medskip

\noindent Proving that the \pbs{PMRM} and the \pbs{PBRM} are $\mathsf{NP}$-complete,
even on certain additional conditions,
implies that the \pbs{MMRM} and the \pbs{MBRM} are $\mathsf{NP}$-hard under
the same conditions.

In our proofs we use a reduction from the \pb{Planar 1-in-3 SAT} 
which is $\mathsf{NP}$-complete~\cite{MulzerR08}.
The input of the \pb{Planar 1-in-3 SAT}
is a Boolean formula in 3-CNF whose associated
graph is planar, and the formula is accepted if and only if there
exists an assignment to its variables such that in each clause
exactly one literal is satisfied~\cite{MulzerR08}. Given any
planar 3-SAT formula, our main idea is to construct a point set $S=S_1\cup S_2$,
such that: the elements of $S_2$ force to match certain pairs 
of points in $S_1$ and those pairs can only be matched with (axis-aligned) segments, 
there always exists a perfect matching with segments for $S_2$ independently of $S_1$,
and there exists a perfect matching with segments for $S_1$ independently of $S_2$ if and only 
the formula is accepted.

The above method can be applied in the construction of
Kratochv{\'\i}l and Ne{\v{s}}et{\v{r}}il~\cite{kratochvil1990} that proves
that finding a maximum independent set in a family of axis-aligned segments is $\mathsf{NP}$-hard. 
Indeed, we can put the elements of $S_1$ at the endpoints of the segments $T$ of their construction,
by first modelling the parallel overlapping segments by segments sharing an endpoint. 
Then the elements of $S_2$ are added in an way that every two elements of $S_1$ can be 
matched if and only if they are endpoints of
the same segment in $T$. This approach would give us a prove that
our optimization problems are $\mathsf{NP}$-hard, but not that our
perfect matching decision problems are $\mathsf{NP}$-complete
which are stronger results. 
On the other hand, our hardness proofs
give and alternative $\mathsf{NP}$-hardness proof for the problem of 
finding a maximum independent set in a family of axis-aligned segments~\cite{kratochvil1990}.

\begin{theorem}\label{theo:hardness}
The \pbs{PMRM} is $\mathsf{NP}$-complete, even if we restrict the matching
rectangles to segments.
\end{theorem}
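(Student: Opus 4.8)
The plan is to establish $\mathsf{NP}$-completeness by first verifying membership in $\mathsf{NP}$ and then giving a polynomial-time reduction from \pb{Planar 1-in-3 SAT}. Membership is immediate: a certificate is the set of matching segments, and checking that the segments are pairwise disjoint, that each covers exactly two same-colored points of $S$, and that every point is covered can all be done in polynomial time. The substance is the hardness reduction, and since we are forced to restrict the matching rectangles to segments, I would design all gadgets so that the only feasible pairs of points lie on a common horizontal or vertical line; this is the mechanism that simultaneously proves the theorem for the segment-restricted case.

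The reduction will build, from a planar 1-in-3 SAT formula $\varphi$, a point set $S = S_1 \cup S_2$ following the blueprint described in the introduction. First I would fix a planar rectilinear embedding of the variable-clause incidence graph of $\varphi$ (such embeddings with polynomial area are standard), and then replace each vertex and edge of this embedding by a gadget laid out on a fine integer grid. The key design principle is to route information along axis-parallel \emph{wires}: a wire is a chain of collinear points in $S_1$, and the parity with which these points pair up (consecutive-from-the-left versus consecutive-from-the-right) encodes a truth value, since in any perfect segment matching a chain of collinear points admits exactly two perfect matchings when its length is even. The auxiliary set $S_2$ is placed so that its own points are forced to match among themselves by segments regardless of $\varphi$, while \emph{blocking} all segments except the intended collinear ones between $S_1$-points; concretely, for every pair of $S_1$-points that should not be matchable I would drop an $S_2$-point strictly inside the would-be enclosing rectangle, so that $D(\cdot,\cdot)$ would then contain a third point and be excluded from $\mathcal{R}(S)$.

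The heart of the construction is the clause gadget realizing the 1-in-3 constraint. For each clause I would arrange the three incoming wires so that the three literal-carrying points meet at a small junction containing a constant number of extra $S_1$-points, colored and positioned so that a local perfect segment matching exists if and only if exactly one of the three incoming wires arrives in its ``true'' phase. I would verify this by an exhaustive case check over the constantly many local configurations, of which there are only eight truth-patterns to inspect. Variable gadgets must consistently fan a single truth value out to all clauses containing that variable, which I would handle with a cyclic ``splitter'' built from collinear points whose two global matchings correspond to the two truth values, together with bend and crossover gadgets (the crossover being needed because the embedding is only planar, not grid-planar) that faithfully transmit a bit through an orthogonal crossing using a fixed local matching analysis.

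The main obstacle I anticipate is twofold. The first difficulty is enforcing that \emph{only} collinear pairs are matchable while still permitting the intended matchings: every blocking $S_2$-point I insert to kill an unwanted box must not inadvertently obstruct a legitimate collinear segment nor create new unintended collinear pairs, and I must simultaneously guarantee that $S_2$ has a perfect segment matching in isolation. Balancing these constraints, especially at the dense junction and crossover gadgets where many rectangles overlap, is where the careful geometric bookkeeping lies. The second difficulty is proving the global equivalence cleanly: I would argue that any perfect matching of $S$ restricts to the forced matching of $S_2$ and hence induces a perfect matching of $S_1$, that each wire is then globally consistent (forced into one of its two phases), and that the junction analysis makes these phases satisfy the 1-in-3 condition at every clause—so that perfect matchings of $S$ correspond exactly to satisfying 1-in-3 assignments of $\varphi$. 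Establishing this correspondence rigorously, rather than merely gadget-by-gadget, is the crux of the argument.
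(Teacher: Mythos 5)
Your high-level route is the same as the paper's: membership in $\mathsf{NP}$ is checked the same way, the reduction is from the \pb{Planar 1-in-3 SAT}, the point set is split into a logical part $S_1$ and a blocking part $S_2$ that is matchable independently of $\varphi$, and truth values are encoded by the parity of a forced matching. But the crux of the theorem is never delivered. First, your wire mechanism is wrong as stated: a chain of $2k$ collinear points in which only consecutive points are matchable (all other collinear pairs contain a third point) is a path graph, and a path on an even number of vertices has exactly \emph{one} perfect matching --- your ``consecutive-from-the-left'' and ``consecutive-from-the-right'' matchings coincide. To obtain two phases you need an even \emph{cycle}, which is exactly why the paper places the $4+6\cdot d(v)$ blue points around the boundary of a variable rectangle $Q_v$, giving precisely the 1-matching and the 0-matching; a straight open chain cannot carry a bit. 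Second, the 1-in-3 clause junction --- the entire content of the reduction --- is only promised (``a constant number of extra $S_1$-points, colored and positioned so that \dots''), with verification deferred to a case check that is never performed. In the paper this is the concrete three-legged comb gadget with its specified point placements (three points on the bottom side of each overlap rectangle $Q_{x,C}$, nine more on the comb boundary, and the parity rule attaching a leg at an even- or odd-numbered boundary point according to the sign of the literal), and the eight-case verification is exhibited. Without an explicit gadget there is no argument that such a configuration of points exists at all under the severe constraint that only collinear pairs are matchable.

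Your crossover gadget compounds the problem. The stated justification --- that ``the embedding is only planar, not grid-planar'' --- is mistaken: planar 3-SAT instances admit a polynomial-size rectilinear embedding with all variables on a horizontal line and clauses drawn as nested, pairwise non-crossing three-legged combs (the Knuth--Raghunathan representation the paper uses), so no crossings ever arise and no crossover gadget is needed. As proposed, you commit yourself to a second unconstructed gadget that is plausibly as hard to build as the clause junction itself. Finally, on blocking: your per-forbidden-pair scheme must also kill unintended \emph{collinear} pairs, for which the ``interior of the enclosing rectangle'' degenerates to a segment, forcing blockers to be collinear with wire points; you correctly flag the resulting bookkeeping (blockers must not obstruct legitimate segments, must not create new matchable pairs, and must themselves admit a perfect matching) but do not resolve it. The paper dissolves all of this at once with a parity trick --- scale the blue points by $2$, place a red point at every grid position with an odd coordinate that is not on a dotted line, then scale again and duplicate the red set shifted one unit down so the red points pair vertically among themselves. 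In sum: same strategy, but the two constructions on which the equivalence rests (two-phase variable encoding and the 1-in-3 junction) are respectively broken and missing.
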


\begin{proof}
Given a combinatorial matching of $S$, certifying that such a matching
is monochromatic, strong, and perfect can be done in  polynomial time.
Then the \pbs{PMRM} is in $\mathsf{NP}$. We prove now that
the \pbs{PMRM} is $\mathsf{NP}$-hard.

Let $\varphi$ be a planar 3-SAT formula. 
The (planar) graph associated with $\varphi$ can 
be represented in the plane as in Figure~\ref{fig:planar3SAT-sample},
where all variables lie on an horizontal line,
and all clauses are represented by {\em non-intersecting} three-legged combs~\cite{Knuth1992}.
Using this embedding, which can be constructed in a grid of polynomial size~\cite{Knuth1992},
we construct a set $S$ of red and blue integer-coordinate points in a polynomial-size grid,
such that there exists a perfect monochromatic strong matching with (axis-aligned) segments in $S$
if and only if $\varphi$ is accepted.

\begin{figure}[h]
	\centering	
	\includegraphics[scale=1]{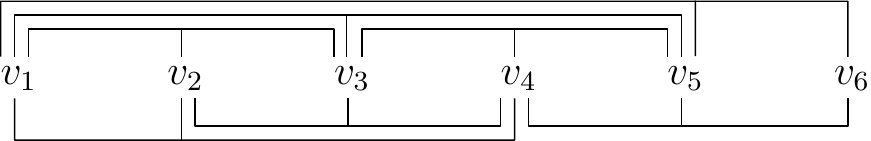}		
	\caption{\small{Planar representation of $\varphi=(v_1 \lor \overline{v_2} \lor v_3) 
	\land (v_3 \lor \overline{v_4} \lor \overline{v_5})\land(\overline{v_1} \lor \overline{v_3} \lor v_5)
	\land(v_1 \lor \overline{v_2} \lor v_4)\land(\overline{v_2} \lor \overline{v_3} \lor \overline{v_4})
	\land(\overline{v_4} \lor v_5 \lor \overline{v_6})\land(\overline{v_1} \lor v_5 \lor v_6)$.}}
\label{fig:planar3SAT-sample}
\end{figure}

For an overview of our construction of $S$, refer to Figure~\ref{fig:clause-eval}. 
We use variable gadgets (the dark-shaded rectangles called variable rectangles) 
and clause gadgets (the light-shaded orthogonal polygon representing the three-legged comb).

\begin{figure}[h]
	\centering	
	\includegraphics[scale=0.9,page=1]{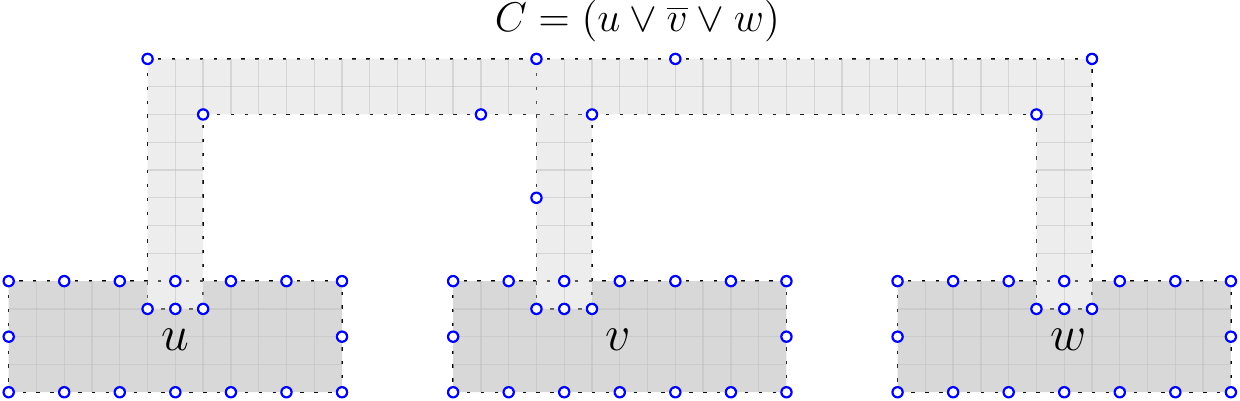}		
	\caption{\small{The variable gadgets and the clause gadgets. In the figure, each variable $u$, $v$, $w$
	might participate in other clauses.}}
	\label{fig:clause-eval}
\end{figure}

\paragraph*{Variable gadgets:}
For each variable $v$, its rectangle $Q_v$ has height $4$ and width $6\cdot d(v)$,
where $d(v)$ is the number of clauses in which $v$ appears. We assume that each
variable appears in every clause at most once. Along the boundary
of $Q_v$, starting from a vertex, we put blue points so that every two successive points are at distance
$2$ from each other. We number consecutively in clockwise order these $4+6\cdot d(v)$ points, starting from the 
top-left vertex of $Q_v$ which is numbered 1.

\paragraph*{Clause gadgets:}
Let $C$ be a clause with variables $u$, $v$, and $w$, appearing in this order
from left to right in the embedding of $\varphi$. Assume w.l.o.g.\ that
the gadget of $C$ is above the horizontal line through the variables.
Every leg of the gadget of $C$ overlaps the rectangles of its 
corresponding variable (denoted $x$) in a rectangle $Q_{x,C}$ of height 1 and width 2, 
so that the midpoint of the top side of $Q_{x,C}$ is a blue point in the boundary of $Q_x$.
The overlapping satisfies that such a midpoint is numbered with an even number
if and only if $x$ appears positive in $C$.
We further put three blue points equally spaced in the bottom side of $Q_{x,C}$, and 
other 9 blue points in the boundary of the gadget, as shown in Figure~\ref{fig:clause-eval}.

\paragraph*{Forcing a convenient matching of the blue points:}
We add red points (a polynomial number of them) in such a way that any
two blue points $a$ and $b$ can be matched if and only if $D(a,b)$ is a segment 
of any dotted line and does not contain any other colored point than $a$ and $b$
(see Figure~\ref{fig:clause-eval}). This can be done as follows: Since blue points have all
integer coordinates, we can scale the blue point set (multiplying by 2) so that every element
has even $x$- and $y$-coordinates. Then, we put a red point over every point of at least one 
odd coordinate that is not over any dotted line. We finally scale again the points, 
the blue and the red ones,
and make a copy of the scaled red points and move it one unit downwards.
%

\paragraph*{Reduction:}
Observe that in each variable $v$, the blue points along the boundary of $Q_v$ can
be matched independently of the other points, and that they have two perfect strong
matchings: the {\em 1-matching} that matches the $i$th point with the $(i+1)$th point for all odd $i$; 
and the {\em 0-matching} that matches the $i$th point with the $(i+1)$th one for all even $i$. 
In each clause $C$ in which $v$ appears, each of these two matchings {\em forces} a maximum 
strong matching on the blue points 
in the leg of the gadget of $C$ that overlaps $Q_v$, until reaching the points
in the union of the three legs. 
We consider that variable $v=1$
if we use the 1-matching, and consider $v=0$
if the 0-matching is used.
Let $C$ be a clause with variables $u$, $v$, and $w$;
and draw perfect strong matchings on the blue points of the boundaries
of $Q_u$, $Q_v$, and $Q_w$, respectively, giving values to $u$, $v$, and $w$.
%
%
Notice that if exactly one among $u$, $v$, and $w$ makes $C$ positive, 
then the strong matching forced in the blue points of the gadget of $C$ is perfect 
(see Figure~\ref{fig:clause-eval-true-1u} and Figure~\ref{fig:clause-eval-true-1v}).
Otherwise, if none or at least two among $u$, $v$, and $w$ make $C$ positive, then
the strong matching forced on the blue points of the gadget of $C$ is not perfect since
at least 2 blue points are unmatched (see Figure~\ref{fig:clause-eval-true-0} and
Figure~\ref{fig:clause-eval-true-2}).
Finally, note that the red points admit a perfect
strong matching with segments such that no segment contains a blue point.
Therefore, we can ensure that the 3-SAT formula $\varphi$ can be accepted
if and only if the point set $S$ admits a perfect strong matching with segments.\seacabo
\end{proof}

\begin{figure}[h]
	\centering	
	\includegraphics[scale=0.9,page=3]{variable_clause2.pdf}		
	\caption{\small{If $u=1$, $v=1$, and $w=0$, then only $u$ makes $C$ positive
	and there exists a perfect strong matching on the blue points.}}
	\label{fig:clause-eval-true-1u}
\end{figure}

\begin{figure}[h]
	\centering	
	\includegraphics[scale=0.9,page=4]{variable_clause2.pdf}		
	\caption{\small{If $u=0$, $v=0$, and $w=0$, then only $v$ makes $C$ positive
	and there exists a perfect strong matching on the blue points.}}
	\label{fig:clause-eval-true-1v}
\end{figure}

\begin{figure}[h]
	\centering	
	\includegraphics[scale=0.9,page=2]{variable_clause2.pdf}		
	\caption{\small{If $u=0$, $v=1$, and $w=0$, then no variable makes $C$ positive
	and there does not exist any perfect strong matching on the blue points.}}
	\label{fig:clause-eval-true-0}
\end{figure}

\begin{figure}[h]
	\centering	
	\includegraphics[scale=0.9,page=5]{variable_clause2.pdf}		
	\caption{\small{If $u=1$, $v=0$, and $w=1$, then two variables make $C$ positive
	and there does not exist any perfect strong matching on the blue points.}}
	\label{fig:clause-eval-true-2}
\end{figure}

%

Suppose now that the two-colored point set $S$
is in general position. In what follows we show that the 
\pbs{PMRM} remains $\mathsf{NP}$-complete under this assumption.
To this end we first perturb the two-colored point set of the construction
of the proof of Theorem~\ref{theo:hardness} so that no two points 
share the same $x$- or $y$-coordinate, and second show that two
points of $S$ can be matched in the perturbed point set if and only if
they can be matched in the original one.

Alliez et al.~\cite{alliez97} proposed the transformation that
replaces each point $p=(x,y)$ by the point 
$\lambda(p):=((1+\varepsilon)x+\varepsilon^2 y,\varepsilon^3x+y)$ for some small enough
$\varepsilon>0$, with the aim of removing the degeneracies in a point
set for computing the Delaunay triangulation under the $L_{\infty}$ metric.
Although this transformation can be used for our purpose, by using the fact that
the points in the proof of Theorem~\ref{theo:hardness} belong to a grid 
$[0..N]^2$, where $N$ is polynomially-bounded, we use the simpler  
transformation $\lambda(p):=((1+\varepsilon)x+\varepsilon y,\varepsilon x+(1+\varepsilon)y)$ 
for $\varepsilon=1/(2N+1)$, which is linear in $\varepsilon$. Both transformations
change the relative positions of the initial points in the manner showed in 
Figure~\ref{fig:perturbation}.
Some useful properties of our transformation, stated in the next lemma,
were not stated by Alliez et al.~\cite{alliez97}.

\begin{figure}
	\centering	
	\includegraphics[scale=0.9]{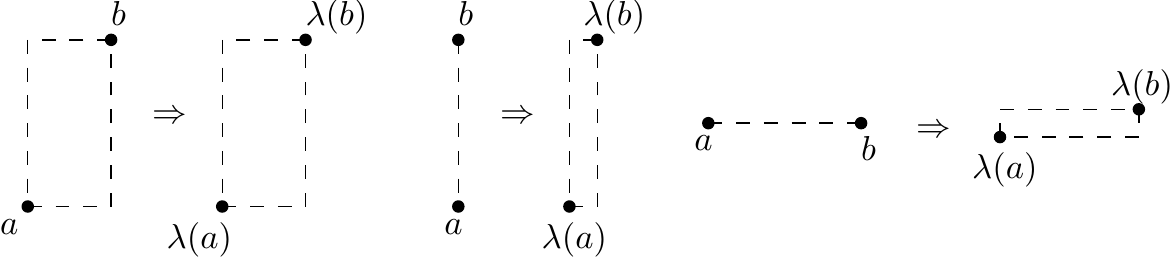}		
	\caption{\small{Perturbation of the point set to put $S$ in general position.}}
	\label{fig:perturbation}
\end{figure}

\begin{lemma}\label{lem:perturb}
Let $N$ be a natural number and $P\subseteq [0..N]^2$. 
The function $\lambda:P\rightarrow\mathbb{Q}^2$ such that 
$$\lambda(p)=\Biggl(x(p)+\frac{x(p)+y(p)}{2N+1},y(p)+\frac{x(p)+y(p)}{2N+1}\Biggr)$$
satisfies the next properties:
\begin{itemize}
\item[$(a)$] $\lambda$ is injective and the point set $\lambda(P):=\{\lambda(p):p\in P\}$
is in general position.

\item[$(b)$] For every two distinct points $a,b\in P$ 
such that $x(a)=x(b)$ or $y(a)=y(b)$, we have that
$D(a,b)\cap P=\{a,b\}$ if and only if 
$D(\lambda(a),\lambda(b))\cap \lambda(P)=\{\lambda(a),\lambda(b)\}$.

\item[$(c)$] For every three distinct points $a,b,c\in P$ 
such that $x(a)\neq x(b)$ and $y(a)\neq y(b)$, we have that
$c$ belongs to the interior of $D(a,b)$ if and only if 
$\lambda(c)$ belongs to the interior of $D(\lambda(a),\lambda(b))$.


\end{itemize}
\end{lemma}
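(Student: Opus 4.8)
The plan is to set $\varepsilon:=1/(2N+1)$ and $\delta(p):=\varepsilon\bigl(x(p)+y(p)\bigr)$, so that $\lambda(p)=\bigl(x(p)+\delta(p),\,y(p)+\delta(p)\bigr)$; thus $\lambda$ translates each point of $P$ along the diagonal direction $(1,1)$ by the nonnegative amount $\delta(p)$. The whole argument rests on one elementary estimate: since $0\le x(p)+y(p)\le 2N$ for every $p\in P$, any two points $p,q$ satisfy $|\delta(p)-\delta(q)|\le 2N\varepsilon=\tfrac{2N}{2N+1}<1$, whereas two distinct integer coordinates differ by at least $1$. I would package this into a single \emph{order lemma}: for all distinct $p,q\in P$, (i) if $x(p)<x(q)$ then $x(\lambda(p))<x(\lambda(q))$ (and symmetrically in $y$), because a gap of at least $1$ dominates a shift difference of less than $1$; and (ii) if $x(p)=x(q)$ then $x(\lambda(p))-x(\lambda(q))=\varepsilon\bigl(y(p)-y(q)\bigr)$ has the same sign as $y(p)-y(q)$, so coordinate ties are broken consistently along the diagonal (again symmetrically in $y$). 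Both clauses are immediate once the estimate above is in hand.

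Part $(a)$ then follows directly. The map lands in $\mathbb{Q}^2$ because $\varepsilon$ is rational. For injectivity and general position it suffices to rule out $x(\lambda(p))=x(\lambda(q))$ (and the symmetric $y$-equality) for $p\neq q$: clause (i) forbids $x(p)\neq x(q)$, and then clause (ii) forces $y(p)=y(q)$, whence $p=q$, a contradiction; the same argument applies to the $y$-coordinate. Thus no two images share a coordinate, and in particular $\lambda$ is injective.

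For $(b)$ and $(c)$ the ``only if'' directions are the easy ones and come straight from clause (i): if a point $c$ lies strictly inside the segment or box determined by $a$ and $b$, its strict coordinate orderings relative to $a$ and $b$ are preserved, placing $\lambda(c)$ inside the transformed rectangle (I also use (i) and (ii) to identify which images are the spanning corners). The substance, and the step I expect to be the main obstacle, is the converse: showing that the perturbation cannot create a spurious containment. The key is that clause (i) has a converse at the level of weak inequalities, namely $x(\lambda(c))\le x(\lambda(a))$ forces $x(c)\le x(a)$ (otherwise (i) would reverse the order), and likewise for the three remaining sides. In the segment case $(b)$, with $x(a)=x(b)=:x_0$, applying this to both vertical sides pins $x(c)$ between $x_0$ and $x_0$, i.e.\ $x(c)=x_0$; once $c$ shares the line $x=x_0$, the images' $y$-order equals the originals' (there $y(\lambda(\cdot))=(1+\varepsilon)y(\cdot)+\varepsilon x_0$ is increasing), and the remaining two inequalities give $y(a)\le y(c)\le y(b)$, so $c$ lies on the closed segment. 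In the box case $(c)$, the same converse applied to the four sides yields $x(a)\le x(c)\le x(b)$ and $y(a)\le y(c)\le y(b)$ at once.

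The one point that genuinely needs care is that this discrete argument naturally delivers \emph{closed} containment of $c$ in $D(a,b)$, whereas $(c)$ is phrased with the open interior. I would reconcile this through part $(a)$: after the perturbation the point set is in general position, so $\lambda(c)$ can never lie on an edge of $D(\lambda(a),\lambda(b))$, and hence interior containment of $\lambda(c)$ coincides with closed containment. This is exactly the condition the reduction needs, since what matters for a matching box is whether $D(a,b)$ contains a third point of $P$. Making this ``closed on the source side, interior on the image side'' correspondence precise, and checking it behaves correctly for points of $P$ that fall on the boundary of $D(a,b)$, is where I would spend the most care, but no idea beyond the order lemma is required.
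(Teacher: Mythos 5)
Your order lemma is precisely the content behind the paper's own (one-line) proof, which just records the inequality $0\leq \frac{x(p)+y(p)}{2N+1}\leq\frac{2N}{2N+1}<1$; your elaboration of it, your proof of $(a)$, your proof of both directions of $(b)$, and your proof of the forward direction of $(c)$ are all correct and complete. The genuine gap is exactly where you flagged it: the backward direction of $(c)$. Your weak-inequality converse of clause~(i) yields only the \emph{closed} containment $x(a)\leq x(c)\leq x(b)$ and $y(a)\leq y(c)\leq y(b)$, and your proposed repair via part $(a)$ fixes the wrong side of the equivalence: general position of $\lambda(P)$ shows that $\lambda(c)$ cannot lie on the boundary of $D(\lambda(a),\lambda(b))$, but it says nothing about $c$ lying on the boundary of $D(a,b)$, which is precisely the case that can occur because $P$ itself is \emph{not} in general position. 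Moreover this case cannot be argued away, since the backward implication of $(c)$ as stated is false: take $N=2$, $a=(0,0)$, $b=(2,2)$, $c=(0,1)$. Then $c$ lies on the left edge of $D(a,b)$, not in its interior, yet $\lambda(a)=(0,0)$, $\lambda(b)=(\frac{14}{5},\frac{14}{5})$, $\lambda(c)=(\frac{1}{5},\frac{6}{5})$, so $\lambda(c)$ is strictly interior to $D(\lambda(a),\lambda(b))$. No closed/open rephrasing rescues the biconditional either: with $a=(0,2)$, $b=(2,0)$ the same boundary point $c$ maps \emph{outside} $D(\lambda(a),\lambda(b))=[\frac{2}{5},\frac{12}{5}]^2$, so boundary points can be pushed either in or out depending on the orientation of the diagonal.

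The sound conclusion is that only the forward implication of $(c)$ holds (if $c$ is interior to $D(a,b)$ then $\lambda(c)$ is interior to $D(\lambda(a),\lambda(b))$), which your clause~(i) gives immediately; and this is in fact the only direction the paper ever uses --- observation~(d) in the proof of Theorem~\ref{theo:hardness-generalpos} invokes property $(c)$ solely to show that a box blocked by an interior point stays blocked after perturbation, while segments are handled entirely by property $(b)$. So your proof establishes everything the reduction actually needs, but your closing claim that ``no idea beyond the order lemma is required'' for the stated biconditional is wrong, because the stated biconditional fails; the honest fix is to weaken $(c)$ to the one-directional implication (boxes blocked only by boundary points must then be dealt with separately, e.g.\ by the density of the red points in the construction of Theorem~\ref{theo:hardness}), a subtlety that the paper's one-line proof also glosses over.
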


\begin{proof}
Properties (a-c) are a consequence of $0\leq \frac{x(p)+y(p)}{2N+1} \leq \frac{2N}{2N+1}<1$.
\seacabo
\end{proof}

\begin{theorem}\label{theo:hardness-generalpos}
The \pbs{PMRM} remains $\mathsf{NP}$-complete on point sets in
general position.
\end{theorem}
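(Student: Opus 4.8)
The plan is to leverage the perturbation function $\lambda$ from Lemma~\ref{lem:perturb} to transform the point set $S$ constructed in the proof of Theorem~\ref{theo:hardness} into a point set in general position, while preserving exactly which pairs of points can be matched. Since membership in $\mathsf{NP}$ is unaffected by the general-position assumption (the certificate-checking argument from Theorem~\ref{theo:hardness} carries over verbatim), the only task is to re-establish $\mathsf{NP}$-hardness. First I would recall that the construction in Theorem~\ref{theo:hardness} lives in a grid $[0..N]^2$ with $N$ polynomially bounded, and I would apply $\lambda$ to every point of $S$ to obtain a new point set $\lambda(S)$. By part $(a)$ of Lemma~\ref{lem:perturb}, $\lambda(S)$ is in general position, and the transformation is computable in polynomial time, so the reduction remains polynomial.

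The heart of the argument is to show that $S$ admits a perfect monochromatic strong matching \emph{if and only if} $\lambda(S)$ does, which reduces to showing that the set of matchable monochromatic pairs is identical for $S$ and $\lambda(S)$. I would split this into two cases according to the nature of the original pair $a,b$. In the construction of Theorem~\ref{theo:hardness}, every admissible matching pair is a \emph{segment} (horizontally or vertically aligned), so the relevant pairs satisfy $x(a)=x(b)$ or $y(a)=y(b)$; for these, part $(b)$ of the lemma directly guarantees that $D(a,b)\cap S=\{a,b\}$ if and only if $D(\lambda(a),\lambda(b))\cap\lambda(S)=\{\lambda(a),\lambda(b)\}$, so a matchable pair stays matchable and a non-matchable pair stays non-matchable. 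For pairs $a,b$ that are neither horizontally nor vertically aligned in the original construction (i.e.\ genuine boxes), I would use part $(c)$ to argue that no spurious new matchable box is created: any such box in $\lambda(S)$ corresponds to a box in $S$ with the same interior occupancy, and since the original construction forces all useful matches to be segments, these boxes either were already blocked or connect points of different colors and hence are irrelevant to a monochromatic matching.

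With the pairwise-matchability correspondence established, the equivalence of perfect matchings follows immediately: a perfect monochromatic strong matching of $S$ is a set of disjoint rectangles each covering a matchable monochromatic pair and covering all points, and applying $\lambda$ pairwise yields a perfect monochromatic strong matching of $\lambda(S)$, and vice versa. Disjointness is preserved because $\lambda$ is injective by part $(a)$ and because two rectangles $D(a,b)$ and $D(a',b')$ are disjoint exactly when no endpoint of one lies in the closed region of the other, a condition governed by the interior- and boundary-occupancy statements in parts $(b)$ and $(c)$. Thus $\varphi$ is accepted if and only if $\lambda(S)$ admits a perfect monochromatic strong matching, completing the reduction from \pb{Planar 1-in-3 SAT}.

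I expect the main obstacle to be the careful handling of the box pairs in the second case: I must verify that the perturbation does not inadvertently make some box become a matchable rectangle (with empty interior) that connects two same-colored points and thereby enables a perfect matching of $\lambda(S)$ that has no counterpart in $S$. The precise control here comes from the bound $0\le \tfrac{x(p)+y(p)}{2N+1}<1$ underlying Lemma~\ref{lem:perturb}, which ensures the relative grid structure and occupancy of every rectangle is preserved; confirming that this rules out all such spurious boxes is the delicate step, whereas the reduction's overall skeleton is a direct transfer from Theorem~\ref{theo:hardness}.
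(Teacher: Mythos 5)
Your proposal is correct and follows essentially the same route as the paper: apply the perturbation $\lambda$ of Lemma~\ref{lem:perturb} to the point set of Theorem~\ref{theo:hardness}, use property $(b)$ for the aligned (segment) pairs and property $(c)$ for the boxes blocked by interior points to show that the matchable pairs of $S$ and $\lambda(S)$ coincide, and conclude that a perfect monochromatic strong matching exists in one set if and only if it exists in the other. One small aside: your stated criterion that two rectangles are disjoint exactly when neither contains an endpoint of the other is false in general (it fails for piercing or crossing intersections), but this is harmless here, since the paper likewise treats disjointness preservation as immediate --- it follows from the observation that disjoint axis-aligned rectangles with integer coordinates are separated by an axis-parallel line through an integer coordinate, while every coordinate shift under $\lambda$ is nonnegative and strictly less than $1$.
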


\begin{proof}
Let $S$ be the colored point set generated in the reduction of the proof
of Theorem~\ref{theo:hardness}. Let $N$ be a polynomially-bounded natural
number such that $S\subset [0..N]^2$, and let $S':=\lambda(S)$, where
$\lambda$ is the function of Lemma~\ref{lem:perturb}. Consider
the next observations:
\begin{itemize}
\item[(a)] If $a,b\in S$ are red points that can be matched in $S$
because $x(a)=x(b)$ and $y(b)=y(a)-1$, then $\lambda(a)$ and $\lambda(b)$
can also be matched in $S'$ (Property (b) of Lemma~\ref{lem:perturb}).

\item[(b)] If $a,b\in S$ are blue points that can be matched in $S$,
then we have that either $x(a)=x(b)$ or $y(a)=y(b)$, which implies that
$\lambda(a)$ and $\lambda(b)$ can also be matched in $S'$ by
Property (b) of Lemma~\ref{lem:perturb}.

\item[(c)] If $a,b\in S$ are blue points that cannot be matched in $S$ 
because $D(a,b)$ is a segment containing a red point $c\in S$, then 
neither $\lambda(a)$ and $\lambda(b)$ can
be matched in $S'$ (Property (b) of Lemma~\ref{lem:perturb}).

\item[(d)] If $a,b\in S$ are blue points that cannot be matched in $S$ 
because $D(a,b)$ is a box containing
a point $c\in S$ in the interior, then neither $\lambda(a)$ and $\lambda(b)$ can
be matched in $S'$ since the box $D(\lambda(a),\lambda(b))$ contains $\lambda(c)$
(Property (c) of Lemma~\ref{lem:perturb}).
\end{itemize}
The above observations imply that there exists a perfect strong rectangle matching in
$S$ if and only if it exists in $S'$.
The result thus follows since $S'$ is in general position by Property (a) of Lemma~\ref{lem:perturb}.
\seacabo
\end{proof}

Combining the construction of Theorem~\ref{theo:hardness} with
the perturbation of Lemma~\ref{lem:perturb}, we can prove that
the \pbs{PMRM} is also $\mathsf{NP}$-complete when
all points have the same color, and that
the \pbs{PBRM} is also $\mathsf{NP}$-complete.

\begin{lemma}\label{lem:blocking}
Let $M_1:=\{(0,0),(5,0),(5,5),(0,5)\}$ and $M_2:=\{(1,3),(2,2),(2,3),$ $(2,4),(3,1),(3,2),(3,3),(4,2)\}$ 
be two point sets. The point set $M_1\cup M_2$ has a perfect strong matching with rectangles,
and for every proper subset $M'_1\subset M_1$ the point set $M'_1\cup M_2$ does not have
any perfect strong matching with rectangles.
\end{lemma}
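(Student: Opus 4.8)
The plan is to reduce the geometric question to a purely combinatorial matching problem on an auxiliary graph, then settle existence by an explicit construction and non-existence by a forced-matching argument that exploits the symmetry of $M_1\cup M_2$.

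First I would record the matchability criterion: a pair $\{a,b\}\subseteq M_1\cup M_2$ can appear in a strong matching if and only if the minimum enclosing rectangle $D(a,b)$ contains no third point of $M_1\cup M_2$. Necessity is clear, since any axis-aligned rectangle through $a$ and $b$ must contain $D(a,b)$; sufficiency follows by using $D(a,b)$ itself. This turns the problem into understanding the graph $\Gamma$ whose edges are exactly the matchable pairs. A short case check over bounding boxes yields the crucial structure of $\Gamma$ restricted to $M_2$: the points $(1,3)$ and $(2,4)$ each have $(2,3)$ as their unique $M_2$-neighbor, the points $(3,1)$ and $(4,2)$ each have $(3,2)$ as their unique $M_2$-neighbor, and the two ``diagonal'' points $(2,2),(3,3)$ are adjacent (inside $M_2$) only to the two ``hubs'' $(2,3)$ and $(3,2)$. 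By the same bounding-box checks I would compute the matchable partners each corner offers, obtaining $(0,0)\mapsto\{(1,3),(2,2),(3,1)\}$, $(5,0)\mapsto\{(3,1),(4,2)\}$, $(5,5)\mapsto\{(2,4),(3,3),(4,2)\}$, and $(0,5)\mapsto\{(1,3),(2,4)\}$.

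For the existence half I would simply exhibit a matching: pair $(0,0)$ with $(1,3)$, $(5,0)$ with $(3,1)$, $(5,5)$ with $(4,2)$, and $(0,5)$ with $(2,4)$, and match the central four points with the two horizontal segments $(2,2)$--$(3,2)$ and $(2,3)$--$(3,3)$. I would then verify that each of these six rectangles contains exactly its two endpoints and that the six are pairwise disjoint, which is immediate from their coordinate ranges (e.g.\ $[0,1]\times[0,3]$, $[3,5]\times[0,1]$, $[4,5]\times[2,5]$, $[0,2]\times[4,5]$, together with the two segments living in $[2,3]\times\{2\}$ and $[2,3]\times\{3\}$).

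For the non-existence half, I would first dispose of parity: if $|M_1'|$ is odd then $|M_1'\cup M_2|$ is odd and no perfect matching exists, so only $|M_1'|\in\{0,2\}$ need work. The engine of the argument is that $(2,3)$ and $(3,2)$ each carry only one ``unit of supply'': once an outer point is forced onto its hub, that hub is no longer available to $(2,2)$ or $(3,3)$, whose only possible partners are precisely the two hubs. I would use the symmetry group of $M_1\cup M_2$ generated by the reflections across the diagonals $y=x$ and $y=5-x$, which fixes $M_2$ setwise and permutes the corners, to reduce the six $2$-subsets to the three representatives $\{(0,0),(5,5)\}$, $\{(5,0),(0,5)\}$, and $\{(0,0),(5,0)\}$. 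In each representative case the forced-matching analysis shows that removing the two available corners leaves too few escape routes for the four outer points, so at least two of them are pushed onto $(2,3),(3,2)$; after that $(2,2)$ and $(3,3)$ can only compete for a single remaining hub and one of them is necessarily unmatched. The empty-subset case is the degenerate version of the same phenomenon, where $(1,3)$ and $(2,4)$ both have $(2,3)$ as their only admissible partner. The main obstacle I anticipate is the bookkeeping in this last step: I must check that the forcing is genuinely inescapable in every subcase—in particular that $(2,2)$ cannot ``steal'' a hub earlier, which is ruled out precisely because the corresponding outer point would then be stranded—and that the symmetry reduction is applied correctly, noting that the configuration is invariant only under the Klein four-group of diagonal symmetries and \emph{not} under a $90^\circ$ rotation, so $\{(0,0),(5,5)\}$ and $\{(5,0),(0,5)\}$ must be handled as genuinely distinct representatives.
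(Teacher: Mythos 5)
Your proof is correct, but it is worth noting what the paper actually does here: its entire proof of this lemma is the sentence ``The proof is straightforward,'' deferring to three figures (the point set, one explicit perfect matching, and a picture asserting that deleting exactly two of the corners $a,b,c,d$ destroys all perfect matchings). So your proposal is not a variant of a written argument but a genuine proof of what the paper only illustrates, and your ingredients --- the empty-bounding-box matchability criterion, the explicit six-rectangle matching (which, with the two horizontal segments $[2,3]\times\{2\}$ and $[2,3]\times\{3\}$, coincides with the matching the paper's figure shows), the parity step, and the reduction under the diagonal Klein four-group to the three representatives --- all check out; I verified the adjacency lists you state. Two remarks. First, your partner lists for the corners omit the corner--corner edges: the four side segments of the square, e.g.\ $D((0,0),(5,0))=[0,5]\times\{0\}$, contain no point of $M_2$, so adjacent corners are matchable with each other. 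This matters only in your representative case $\{(0,0),(5,0)\}$, where the two surviving corners can be paired with one another --- a branch your lists as written would hide; it is harmless (pairing the corners together only shrinks the supply left for the inner points), but it should be stated. Second, your case analysis can be compressed so that the parity step, the symmetry reduction, and all forced-matching bookkeeping become unnecessary: the six points $(1,3),(2,4),(3,1),(4,2),(2,2),(3,3)$ are pairwise non-matchable, and every admissible partner of any of them lies in $\{(2,3),(3,2)\}\cup M_1'$, a set of size at most $2+3=5<6$ whenever $M_1'\subset M_1$ is proper; since a perfect matching would have to assign these six points six distinct partners, none exists --- a single Hall-type count covering all proper subsets at once (and showing the full configuration is exactly tight, with neighborhood of size six). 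Relatedly, your phrase that after the forcing ``$(2,2)$ and $(3,3)$ compete for a single remaining hub'' is slightly off --- once two satellites take the hubs, no hub remains at all --- but the supply-versus-demand conclusion you draw from it is unaffected.
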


\begin{proof}
The proof is straightforward (see Figure~\ref{fig:green-x}, Figure~\ref{fig:green-y},
and Figure~\ref{fig:green-z}).
\end{proof}

\begin{figure}[h]
	\centering
	\subfloat[]{
		\includegraphics[scale=0.8,page=4]{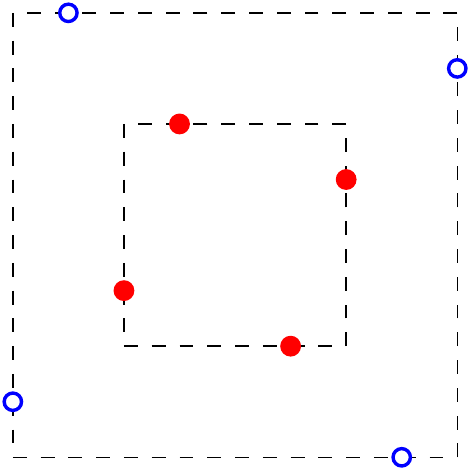}
		\label{fig:green-x}
	}\hspace{0.2cm}
	\subfloat[]{
		\includegraphics[scale=0.8,page=5]{green_point.pdf}
		\label{fig:green-y}
	}\\
	\subfloat[]{
		\includegraphics[scale=0.8,page=6]{green_point.pdf}
		\label{fig:green-z}
	}
	\caption{\small{(a) The point set $M_1\cup M_2$. (b) A perfect matching of $M_1\cup M_2$.
	(c) If exactly two points among $a,b,c,d$ are removed, then the remaining points do not have
	any perfect strong matching. 
	}}
\end{figure}

\begin{theorem}\label{theo:hardness-bichromatic}
The \pbs{PMRM} remains $\mathsf{NP}$-complete if all elements
of $S$ have the same color.
\end{theorem}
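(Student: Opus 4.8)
The plan is to keep the reduction from the \pb{Planar 1-in-3 SAT} of Theorem~\ref{theo:hardness}, reusing its variable and clause gadgets verbatim, but now assigning all points a single color and applying the perturbation $\lambda$ of Lemma~\ref{lem:perturb} to put the instance in general position. Under one color every strong matching is automatically monochromatic, so the equivalence of Theorem~\ref{theo:hardness} would transfer directly were it not for one new effect: a pair that the earlier proof forbade only through the color constraint---a structural (former blue) point together with a blocker (former red) point---may now be matched whenever the rectangle $D(a,b)$ encloses no third point. My first step is therefore to catalogue exactly which of these \emph{cross pairs} are matchable. Since, by Properties (b) and (c) of Lemma~\ref{lem:perturb}, matchability of a pair is preserved by $\lambda$, this catalogue can be computed on the original integer grid of Theorem~\ref{theo:hardness} and then transported to the perturbed instance.

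To forbid these cross pairs I would deploy the blocking gadget of Lemma~\ref{lem:blocking}. Its decisive property is that $M_1\cup M_2$ admits a perfect matching only if all four corners of $M_1$ are matched inside it; hence, in any \emph{global} perfect matching, a corner of a placed copy must match inward, because matching it outward would leave its gadget one corner short and thus unmatchable. I would place suitably scaled and translated copies of $M_2$ so that every blocker point able to form a matchable cross pair becomes such a forced corner, filling the remaining three corners of each copy with fresh auxiliary points located in an otherwise empty region. Then, although the offending cross pair stays geometrically matchable, no perfect matching can ever use it: the blocker is compelled to match inside its gadget, so the structural point it threatened is released for its intended dotted-line segment match. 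Crucially, the blocker keeps performing its original duty of obstructing every unintended structural box, because a point obstructs $D(p,q)$ purely by lying in its interior, regardless of how that point is itself matched.

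Once all cross pairs are thus rendered unusable, correctness follows the template of Theorem~\ref{theo:hardness}: among the structural points only the intended dotted-line segment matches survive, each variable gadget again offers precisely its $1$-matching and $0$-matching, each clause gadget is perfectly matchable if and only if exactly one of its literals is satisfied, and the blocker and auxiliary points always complete to a perfect matching on their own. Consequently the point set admits a perfect strong rectangle matching if and only if $\varphi$ is accepted. As both the construction and $\lambda$ are of polynomial size and a candidate matching is verifiable in polynomial time, this establishes that the \pbs{PMRM} is $\mathsf{NP}$-complete when all points share a color.

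I expect the main obstacle to be the consistency of this cross-pair elimination. One must certify that \emph{every} matchable cross pair is covered by some forced corner, that the gadgets can be inserted locally---each in its own empty region, contributing only polynomially many points---so that their corner-to-interior rectangles neither collide with one another nor cross the structural segment matches, and that the newly inserted auxiliary and $M_2$ points create no fresh matchable cross pairs with structural points (which would destroy the forcing). One must also check that the blocker points not absorbed into gadgets still admit their original filler matching after the dangerous ones are removed. Confining all these interactions while preserving the ``perfect matching if and only if satisfiable'' equivalence is the delicate heart of the argument.
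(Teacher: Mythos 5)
Your toolkit coincides with the paper's own: reuse the reduction of Theorem~\ref{theo:hardness}, enforce the blocking with copies of the gadget $M_1\cup M_2$ of Lemma~\ref{lem:blocking} via the corner-forcing argument (an outward match of a corner leaves the copy unmatchable, so every copy must be matched internally), and restore general position with the map $\lambda$ of Lemma~\ref{lem:perturb}. But your deployment differs in a way that leaves a genuine gap. You keep the former red points in the instance and try to neutralize only the ``dangerous'' ones by absorbing each as a corner of a gadget copy. The difficulty you flag in your last paragraph is not a loose end but the crux, and for this construction it is fatal as stated: in Theorem~\ref{theo:hardness} the red points are placed densely (one over every point with at least one odd coordinate off the dotted lines, then duplicated one unit downwards), so essentially every blue structural point has red neighbors forming empty, hence matchable, rectangles with it --- for instance the lower copy of a red pair sitting one unit from a blue point on a dotted line. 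Absorbing all such red points as gadget corners destroys the vertical red pairing wholesale: the partners of absorbed points lose their match, and no ``original filler matching'' survives to be checked. You would have to re-engineer the matching of the residual red points from scratch, and simultaneously certify that the fresh auxiliary corners create no new matchable cross pairs; you give a construction for neither, so the direction ``$\varphi$ accepted implies a perfect matching exists'' is not established.

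The paper avoids all of this with one move your plan is missing: it deletes $R_0$ entirely. It first passes to an intermediate set of \emph{green} points $Q$, chosen so that two blue points are matchable in $B_0\cup Q$ if and only if they were matchable in $R_0\cup B_0$; green points carry no matching duty, only the geometric blocking (which, as you correctly note, depends only on a point lying in the rectangle). Then \emph{every} green point --- not just selected blockers --- is replaced by a complete, tiny, all-blue translated and stretched copy of $M_1\cup M_2$, clustered so tightly that the copy still lies inside each rectangle the green point used to block, and so that (by Lemma~\ref{lem:blocking} together with exactly your forcing argument) any global perfect matching must match each copy internally. Since no former red points remain, there are no leftover blockers to re-match, no parity problem, and no cross-pair catalogue to maintain: the only interactions left are between structural blue points and whole self-contained clusters, handled uniformly by the closeness argument. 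In short, your forcing mechanism is the paper's, but the missing idea is to replace the blockers wholesale by self-matching clusters rather than to patch the existing red points in place.
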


\begin{proof}
Let $R_0$ and $B_0$ be the sets of the red points and the blue points,
respectively, in the proof of Theorem~\ref{theo:hardness}.
Let $Q$ be a set of (artificial) {\em green} points to block
the forbidden matching rectangles in $B_0$, that is,
for every two points $p,q\in B_0$ we have that $D(p,q)$ does not contain
elements of $B_0\cup Q$ other than $p$ and $q$ if and only if $D(p,q)$ is
a matching rectangle in $R_0\cup B_0$. 
In other words, $p,q$ can be matched in $R_0\cup B_0$ if and only
if they can be matched in $B_0\cup Q$.
The point set $S_1:=B_0\cup Q$
belongs to the grid $[0..N]^2$, where $N$ is polynomially-bounded, and is not 
in general position. Let $S_2:=\lambda(S_1)$, where $\lambda$ is the function
of the Lemma~\ref{lem:perturb}. We now replace each green point $g$ of $Q$ by a 
translated and stretched copy $S_g$ of the set $M_1\cup M_2$ of Lemma~\ref{lem:blocking},
with all elements colored blue
(see Figure~\ref{fig:green-x}). 
Let $S:=B_0\cup (\bigcup_{g\in Q}S_g)$. 
Putting the elements of $S_g$ close enough one another for every $g$, we can guarantee
that if we want to obtain a perfect strong matching in $S$ then we must have by Lemma~\ref{lem:blocking}
a perfect strong matching in each $S_g$ in particular (see Figure~\ref{fig:green-y})
Therefore, the set $S_g$ acts as the green point $g$ blocking
the forbidden matching rectangles in $B_0$. 
The construction of $S$ starts from the planar 3-SAT formula $\varphi$ 
of the proof of Theorem~\ref{theo:hardness}, and using all the above arguments
we can claim that there exists a perfect strong matching in $S$ if and
only if the formula $\varphi$ is accepted. Hence, the \pbs{PMRM} with input points of the same color
is $\mathsf{NP}$-complete
since there exists a polynomial-time reduction from the
\pb{Planar 1-in-3 SAT}.\seacabo
\end{proof}

\begin{theorem}\label{theo:hardness-bichromatic}
The \pbs{PBRM} is $\mathsf{NP}$-complete, even if the point set $S$
is in general position.
\end{theorem}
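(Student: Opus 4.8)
The plan is to reduce from \pb{Planar 1-in-3 SAT}, reusing the entire geometric layout of the proof of Theorem~\ref{theo:hardness} and modifying only the colors of the points. Membership in $\mathsf{NP}$ is clear: given a combinatorial matching of $S$, one checks in polynomial time that each rectangle is strong, contains exactly one red and one blue point, and that every point of $S$ is covered. For the hardness I would keep the variable rectangles $Q_v$, the three-legged clause combs, and the forced segment matchings that propagate truth values, but recolor the chains of gadget points so that every forced matching rectangle joins a red point to a blue point.

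The key idea is that the gadget chains are matched by segments joining consecutive points $p_i$ and $p_{i+1}$, and both the $1$-matching (pairing $p_i$ with $p_{i+1}$ for odd $i$) and the $0$-matching (pairing $p_i$ with $p_{i+1}$ for even $i$) use only such consecutive segments. Hence, if I color $p_i$ red for odd $i$ and blue for even $i$, then in either matching every pair is a red-blue segment, so the gadgets encode and propagate truth assignments exactly as before while remaining bichromatic. I would fix the starting parity of each chain so that this alternation stays globally consistent as a chain turns the corners of $Q_v$ and enters the legs of a clause comb, and so that the even/odd convention encoding ``$x$ appears positive in $C$'' is preserved; checking this consistency at the points where the three legs meet inside a comb is routine but must be done carefully.

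The step requiring the most work is blocking the forbidden matchings. In the monochromatic reduction the flood of red points could block unwanted matchings for free, because red points never matched blue points; here both colors are active, so a blocker placed inside a forbidden rectangle would itself create new bichromatic matching opportunities. To get around this I would replace each individual blocker by a small, self-contained bichromatic \emph{blocking cluster} in the spirit of Lemma~\ref{lem:blocking}: a constant-size set of red and blue points that (i) admits a perfect bichromatic matching using only its own points, (ii) is rigid, so that by an analogue of Lemma~\ref{lem:blocking} it cannot complete a perfect matching after donating any point to the outside, and (iii) drops a point into each forbidden rectangle, destroying its strongness. Packing the clusters tightly, exactly as in the same-color construction, forces any global perfect matching to match each cluster internally, so a cluster behaves like a single inert blocker. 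The hard part will be designing such a cluster and proving its rigidity in the bichromatic setting, i.e.\ ruling out matchings that pair a cluster point with a nearby chain point; this needs a dedicated case analysis taking the place of the one behind Lemma~\ref{lem:blocking}.

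Finally, to achieve general position I would apply the perturbation $\lambda$ of Lemma~\ref{lem:perturb} exactly as in the proof of Theorem~\ref{theo:hardness-generalpos}. After perturbation each forced segment becomes a thin box, and property~(b) guarantees that a bichromatic pair matchable in $S$ stays matchable while a pair blocked by a collinear point stays blocked, whereas property~(c) preserves every box-shaped forbidden rectangle together with its interior blocker. Therefore a perfect bichromatic matching exists in $\lambda(S)$ if and only if it exists in $S$, if and only if $\varphi$ is accepted, and $\lambda(S)$ is in general position by property~(a), which completes the reduction.
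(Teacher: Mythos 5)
Your proposal follows essentially the same route as the paper's proof: recolor the blue gadget points so that exactly one endpoint of every forced blue--blue segment becomes red (the paper does precisely your alternation; see Figure~\ref{fig:clause-bichromatic}), drop the monochromatic red ``flood'' entirely, replace each pointwise blocker by a self-contained bichromatic cluster, and finish with the perturbation $\lambda$ of Lemma~\ref{lem:perturb}. The one step you explicitly leave open, however, is the only genuinely new gadget in the paper's argument: the paper realizes your blocking cluster as a concrete set $S_g$ of \emph{eight} red and blue points in general position (Figure~\ref{fig:green-a}), packed so tightly that the points of each cluster appear contiguously in both the left-to-right and top-down orders of $S$. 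This contiguity yields rigidity in a form stronger than your condition (ii): no red point of $S_g$ can be matched with \emph{any} blue point outside $S_g$ at all, so every perfect matching of $S$ must match each cluster internally, and the cluster admits exactly two internal perfect matchings (Figures~\ref{fig:green-b} and \ref{fig:green-c}). Since your specification (i)--(iii) is the correct one and such a gadget does exist, your plan is sound, but as written it asserts rather than establishes the crux; a referee would count the missing cluster construction and its case analysis as the gap to fill. One further minor point: the paper applies $\lambda$ to $S_1=S_0\cup Q$ while the blockers are still single green grid points and only \emph{afterwards} substitutes the hand-placed, already-general-position clusters; your plan to perturb after inserting the tightly packed clusters would first require rescaling so that all points lie on the integer grid $[0..N]^2$ demanded by Lemma~\ref{lem:perturb}, so the paper's ordering is the cleaner way to execute your final step.
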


\begin{proof}
Let $R_0$ and $B_0$ be the sets of the red points and the blue points,
respectively, in the proof of Theorem~\ref{theo:hardness}.
Change to color red elements of $B_0$, to obtain the colored point set $S_0$, so that 
for every segment matching two blue points in $R_0\cup B_0$ exactly one of the matched points is 
changed to color red (see Figure~\ref{fig:clause-bichromatic}). 
For every point $p\in B_0$, let $p'$ denote the corresponding point in $S_0$, and vice versa.
Let $Q$ be a set of (artificial) {\em green} points to block
the forbidden matching rectangles in $S_0$, that is,
for every two points $p',q'\in S_0$ we have that $D(p',q')$ does not contain
elements of $S_0\cup Q$ other than $p'$ and $q'$ if and only if $D(p,q)$ is
a matching rectangle in $R_0\cup B_0$. The point set $S_1:=S_0\cup Q$
belongs to the grid $[0..N]^2$, where $N$ is polynomially-bounded, and is not 
in general position. Let $S_2:=\lambda(S_1)$, where $\lambda$ is the function
of the Lemma~\ref{lem:perturb}. We now replace each green point $g$ of $Q$ by the 
set $S_g$ of eight red and blue points in general position (see Figure~\ref{fig:green-a}). 
Let $S:=S_0\cup (\bigcup_{g\in Q}S_g)$. 
Putting the elements of $S_g$ close enough one another for every $g$, we can guarantee
that $S$ is also in general position and that for every $g$ the points of $S_g$ 
appear together in both the left-to-right and the
top-down order of $S$. This last condition ensures that if we want to obtain a 
perfect strong matching in $S$ then we must have a perfect
strong matching for each $S_g$ in particular (see Figure~\ref{fig:green-b} and Figure~\ref{fig:green-c})
because for all $g$ every red point of $S_g$ cannot be matched with any blue point not in $S_g$.
Therefore, the set $S_g$ acts as the green point $g$ blocking
the forbidden matching rectangles in $S_0$. 
%
Hence, the \pbs{PBRM} is $\mathsf{NP}$-complete,
even on points in general position.
\seacabo
\end{proof}

\begin{figure}[h]
	\centering
	\subfloat[]{
		\includegraphics[scale=0.9,page=6]{variable_clause2.pdf}
		\label{fig:clause-bichromatic}
	}\\
	\subfloat[]{
		\includegraphics[scale=0.6,page=1]{green_point.pdf}
		\label{fig:green-a}
	}\hspace{0.2cm}
	\subfloat[]{
		\includegraphics[scale=0.6,page=2]{green_point.pdf}
		\label{fig:green-b}
	}
	\hspace{0.2cm}
	\subfloat[]{
		\includegraphics[scale=0.6,page=3]{green_point.pdf}
		\label{fig:green-c}
	}
	\caption{\small{Proof of Theorem~\ref{theo:hardness-bichromatic}.
	(a) Changing the colors of the blue points in the gadgets of the proof
	of Theorem~\ref{theo:hardness}. (b) The eight points (close enough one another) 
	that replace each green point. (c) One of the only two ways to match the points
	corresponding to a green point in order to obtain a perfect matching. (d) The other way.}}
\end{figure}

\section{Discussion}\label{sec:discussion}

We have proved that finding a maximum strong matching of a two-colored
point set, with either rectangles containing points from the same color
or rectangles containing points of different colors, is $\mathsf{NP}$-hard
and provide a $(1/4)$-approximation for each case. 
Our approximation algorithms provide a (1/4)-approximation for the problem of 
finding a maximum strong rectangle matching
of points of the same color, studied by Bereg et al.~\cite{bereg2009}. However,
the approximation ratio is smaller than $2/3$, the one given by Bereg et al.
We leave as open to find a better constant approximation algorithm for our problems, 
or a $\mathsf{PTAS}$.
On the other hand, finding a constant-approximation algorithm for the general case of
the \pb{Maximum Independent Set of Rectangles} is still an intriguing 
open question.

\small

\bibliographystyle{plain}
\bibliography{matching}

\end{document}